\newcommand{\ifabs}[2]{#2}
\newtheorem{theorem}{Theorem}[section]
\newtheorem{lemma}{Lemma}[section]
\newtheorem{corollary}{Corollary}[section]
\newtheorem{claim}{Claim}
\newtheorem{definition}{Definition}[section]
\newtheorem{remark}{Remark}[section]
\newcommand{\mainmechanism}{{\sc Promoting-Testing-Selling Mechanism}}
\newcommand{\costsharing}{{\sc Cost Sharing Mechanism}}
\newcommand{\calA}{{\cal A}}
\newcommand{\calD}{{\cal D}}
\newcommand{\calF}{{\cal F}}
\newcommand{\calM}{{\cal M}}
\newcommand{\calP}{{\cal P}}
\newcommand{\vecb}{{\mathbf b}}
\newcommand{\vect}{{\mathbf t}}
\begin{document}
\title{Competitive Auctions for Markets with Positive Externalities}

\author{\\Nick Gravin\thanks{Division of Mathematical Sciences, School of Physical and Mathematical Sciences, Nanyang Technological University, Singapore. Email: {\tt ngravin@pmail.ntu.edu.sg}.} \thanks{St. Petersburg Department of Steklov Mathematical Institute RAS, Russia. Email: {\tt gravin@pdmi.ras.ru}. Work done while visiting Microsoft Research Asia.}
\and \\Pinyan Lu\thanks{Microsoft Research Asia. Email: {\tt
pinyanl@microsoft.com}.}}

\date{}

\maketitle

\begin{abstract}
In digital goods auctions, there is an auctioneer who sells an item
with unlimited supply to a set of potential buyers, and the objective
is to design truthful auction to maximize the total profit of the
auctioneer. Motivated from an observation that the values of buyers
for the item could be interconnected through social networks, we
study digital goods auctions with positive externalities among the
buyers. This defines a multi-parameter auction design problem where
the private valuation of every buyer is a function of other winning
buyers. The main contribution of this paper is a truthful
competitive mechanism for subadditive valuations. Our competitive
result is with respect to a new solution benchmark
$\mathcal{F}^{(3)}$; on the other hand, we show a surprising impossibility result
if comparing to the benchmark $\mathcal{F}^{(2)}$, where the latter
has been used quite successfully in digital goods auctions without
extenalities~\cite{Goldberg2006}. Our results from
$\mathcal{F}^{(2)}$ to $\mathcal{F}^{(3)}$ could be considered as
the loss of optimal profit at the cost of externalities.
\end{abstract}



\ifabs{\thispagestyle{empty}
\newpage
\setcounter{page}{1}}

\section{Introduction}
In economics, the term externality is used to describe those
situations where the private costs or benefits to the producers or
purchasers of a good or service differs from the total social costs
or benefits entailed in its production and consumption. In this
context a benefit is called positive externality, while a cost is
called negative. One need not go too far to find examples of positive
external influence in the digital and communications markets, when a
customer's decision to buy a good or purchase a service strongly
relies on its popularity among his friends or generally among other
customers, e.g. instant messenger and cell phone users will want a
product that allows them to talk easily and cheaply with their
friends. Another good example may be given by social networks, where
a user appreciates higher a membership in the network if many of his
friends are already using it. There exist a number
of applications like quite popular Farm Ville in online social network
Facebook, where a user would have more fun when playing it with his
friends. In fact, quite a few of such applications explicitly
reward players with a big number of friends.

On the other hand, the negative external effects take place when a
potential buyer, e.g. a big company, incures a great loss, if a subject
it fights for, like small firm or company, comes to its direct
competitor. Another well studied example related to computer science
may be given by allocation of advertisment slots \cite{Aggarwal,Ghosh08,Ghosh10,Giotis,Gomes,Kempe},
where every customer would like to see a smaller number of competitors'
advertisements on a web page that contains his own advert.
One may also face mixed externalities as in the case of salling nuclear
weapons \cite{Jehiel96R}, where countries would like to see their allies
win the auction rather than their foes.

In contrast, we investigate the problem of {\em mechanism design} for the
auctions with positive externalities. We study the scenario where
an auctioneer sells the goods, of no more than one item each, into the
hands of customers. We define a model for externalities among
buyers in the sealed-bid auction with unlimited supply of the good.
Those kind of auctions arise naturally in the digital
markets, where making a copy of a good (e.g. cd with songs and games or
extra copy of online application) has a negligible cost compared to
the final price and can be done at any time the seller chooses.

Recently similar agenda has been introduced in the paper \cite{Mirrokni}, where
athours consider bayesian framework and study positive externalities in social networks
with single-parameter bidders and submodular valuations.
The model in the most general form can be described by a number of bidders $n$,
each with a non-negative private valuation function $v_i(S)$ depending on the possible
winning set $S$. This is natural mechanism design multi-parameter model
that may be considered as a generalization of classical auctions with unlimited supply, i.e.
auctions where the amount of items being sold is greater than the number
of buyers.

Traditionally the main question arizing in these kind of situations is to maximize
seller's revenue. In the literature on classical aucitons without any
externalities there were developed diverse approaches to this question.
In the current work we pick a different from byesian framework classical
benchmark (cf. \cite{Goldberg2006}), namely the best-uniform-price benchmark called $\calF$.
There one seeks to maximize the ratio of the mechanism's revenue to the revenue of $\calF$
taken in the worst case over all possible bids. In particular a mechnaism is
called competitive if such ratio is bounded by some uniform constant for each possible bid.
However, it was shown that there is no competitive truthful mechanism
w.r.t. $\calF$, and therefore to get round this problem, there was
proposed a slightly modified benchmark $\calF^{(2)}$. The only difference of
$\calF^{(2)}$ to $\calF$ is in one additional requirement
that at least two buyers should be in a winning set. Thus $\calF^{(2)}$
becomes a standard benchmark in analyzing digital auctions.
Similarly to $\calF^{(2)}$ one may define benchmark $\calF^{(k)}$
for any fixed constant $k$. It turns out that the same benchmarks
can be naturally adopted to the case of positive externalities.
Surprisingly $\calF^{(2)}$ fails to serve as a benchmark in social
networks with positive externalities, i.e. no competitive mechanism exists
w.r.t. $\calF^{(2)}$. Therefore, we go further and consider the next natural
candidate for the benchmark, that is $\calF^{(3)}$.

The main contribution of the current paper is a universally truthful
competitive mechanism for the general multi-parameter model with
subadditive valuations (substantially broader class than submodular)
w.r.t. $\calF^{(3)}$ benchmark. As a complement we furnish this result with the proof that
no truthful mechanism can archieve constant ratio w.r.t. $\calF^{(2)}$. In order to do
so we introduce a restricted model with single private parameter which in some respects
resamble that considered in \cite{Mirrokni}; further for this restricted model
we adduce a simple geometric characterization of all truthful mechanisms and based on this characterization
then show that there exists no competitive truthful mechanism.

To be completely consistent we admit that besides claimed
monotonicity (positive externalities) and subadditivity restrictions on the
valuation functions we additionally require that each agent derives zero
value when not obtaining the good. First, this is reallistic
assumption, e.g. without a messanger or online application any customer derives zero utility regardless of
how many his friends got it. Second, in discussion Section we argue that the
later is indeed necessary condition in order to get a competitive mechanism.
We also consider some other natural extentions and show that all of them
fails to archieve a constant ratio w.r.t. any benchmark $\calF^{(k)}$ for a
fixed $k$.

\subsection{Related Works}
Many studies on externalities in the direction of pricing and
marketing strategies over social networks have been conducted over
the past years. They have been caused in many ways by the
development of social-networks on the Internet, which has allowed
companies to collect information about users and their relationships.

The earlier works were generally devoted to the influence
maximization problems (see Chapter 24 of \cite{Kleinberg07}). For
instance, Kempe {\it et.al.} \cite{Kempe2005} study the
algorithmic question of searching a set of nodes in a social network
of highest influence. From the economics literature one could name
such papers as \cite{Pekka07}, which studies the effect of network
topology on a monopolist's profits and \cite{Domingos01}, which studies a
multi-round pricing game, where a seller may lower his price in an
attempt to attract low value buyers. As usual for economics
literature all of these works take no heed of algorithmic
motivation.

More recently there emerged several papers~\cite{Akhlaghpour2009,Arthur2009,Hartline2008}
studying the question of revenue maximization and work studing the post price mechanisms
~\cite{Akhlaghpour,Anari,Candogan,Hartline2008}.

We could not go by without a mention of a beautiful line of research
on revenue maximization for classical auctions, where the
objective is to maximize the seller's revenue compared to a benchmark
in the worst case. We cite here only some papers that are
most relevant to our setting
~\cite{AlaeiMS09,FeigeFHK05,FiatGHK02,Goldberg2006,
Hartline2005}. With respect to the
refined best-uniform-price benchmark $\calF^{(2)}$ a number of
mechanisms with constant competitive ratio were obtained; each
subsequent paper improving the competitive ratio of the previous one
~\cite{FeigeFHK05,FiatGHK02,Goldberg2006,Hartline2005}. The
best known current mechanism by Hartline and
McGrew~\cite{Hartline2005} has a ratio of $3.25$. On the other
hand a lower bound of $2.42$ has been proved in
\cite{Goldberg2006} by Goldberg {\it et.al.}. The question of
closing the gap still remains open.

\subsection*{Organization of the Paper}
We begin with all necessary definitions in Section \ref{sec:definitions}.
Section \ref{section:Mechanism} presents a
competitive mechanism w.r.t. to benchmark $\calF^{(3)}$
for the general model with multi parameter bidding.
In Section \ref{sec:char} we give a geometric characterization of truthful
mechanism for some restricted single-parameter cases, which we need further
is Subsection \ref{subsection:F2} in order to show the impossibility of designing
a competitive mechanism w.r.t. $\calF^{(2)}$. Section \ref{sec:char} is also
furnished with a simpler and better competitive mechanism in Subsection \ref{subsec:additive}
for one of these special cases w.r.t. a stronger $\calF^{(2)}$ benchmark.
We conclude with the Section \ref{section:Extensions} where we discuss possible
extensions of the model and give a list of open questions.

\section{Preliminaries}
\label{sec:definitions}

We suppose that in a marketplace there are present $n$ agents,
the set of which we denote by $[n]$. Each agent $i$ has a private valuation
function $v_i$, which is a nonnegative real number for each possible winner
set $S\subset[n]$. The seller organizes a single round sealed bid auction, where
agents submit their valuations $b_i(S)$ for all possible winner sets $S$ to an auctioneer and
he then chooses agents who will obtain the good and vector of prices to
charge each of them. The auctioneer is interested in maximizing his
revenue.

For every $i\in[n]$ we impose the following quite mild requirements on $v_i$ and later in the
Section \ref{section:Extensions} we will discuss in detail why most of them are indeed necessary.

\begin{enumerate}
\item \label{i1} $v_i(S)\ge 0$.
\item \label{i2} $v_i(S)=0$ if $i\notin S.$
\item \label{i3} $v_i(S)$ is a monotone sub-additive function of $S$, i.e.
     \begin{enumerate}
     \item \label{i3_1}$v_i(S)\le v_i(R)$ if $S\subseteq R\subseteq [n]$.
     \item \label{i3_2}$v_i(S\cup R) \leq v_i(S) +v_i(R),$ for each $i\in S,R\subseteq [n]$
     \end{enumerate}
\end{enumerate}

\subsection{Mechanism Design}
Each agent in turn would like to get a positive utility as
high as possible and may lie strategically about his valuations.
The utility $u_i(S)$ of an agent $i$ for a winning set $S$ is simply the
difference of his valuation $v_i(S)$ and the price $p_i$ the auctioneer
charges $i$. Thus one of the desired properties for the auction is the
well known concept of truthfulness or incentive compatibility, i.e. the
condition that every agent maximizes his utility by truth telling.

It worth to mention here that our model is that of multi-parameter mechanism design
and, moreover, that collecting the whole bunch of values $v_i(S)$ for every
$i\in[n]$ and $S\subset [n]$ would require exponential in $n$ number of bits and thus is
inefficient. However, in the field of mechanism design there is a way to
get around such a problem of exponential input size by the broadly recognized concept
of black box value queries. The later simply means that the auctioneer instead of
getting the whole collection of bids instantly may ask instead during the mechanism execution
every agent $i$ only for a small part of his input, i.e. a number of questions about
valuation of $i$ for certain sets. We note that as usual the agent may lie
in a response on each such query. We denote the bid of $i$ by $b_i(S)$ to distinguish it
from actual valuation $v_i(S)$. Thus if we are interested in designing
computationally efficient mechanism, we can only ask in total a polynomial in $n$
number of queries.

Throughout the paper by $\calM$ we denote a mechanism
with allocation rule $\calA$ and payment rule $\calP$.
Allocation algorithm $\calA$ may ask quarries about valuations
of any agent for any possible set of winners. Thus $\calA$ has
an oracle black box access to the collection of bid functions $b_i(S)$.
For each agent $i$ in the winning set $S$ the
payment algorithm decides a price $p_i$ to charge. The utility of
agent $i$ is then $u_i=v_i(S)-p_i$ if $i\in S$ and $0$ otherwise.
To emphasize the fact that agents may report untruthfully we will use
$u_i(b_i)$ notation for the utility function in the general case and
$u_i(v_i)$ in the case of truth telling. We assume voluntary
participation for the agents, that is $u_i\ge 0$ for each $i$ who reports
the truth.

\subsection{Revenue Maximization and Possible Benchmarks}
We discuss here the problem of revenue maximization from
the seller's point of view. The revenue of the auctioneer is
simply the total payment $\sum_{i \in S} p_i$ of all buyers in the winning set.
We assume that the seller incurs no additional cost for making a copy of the
good. As a matter of fact, this assumption is essential for our
model, since unlike the classical digital auction case there is no simple
reduction of the settings with a positive price per issuing
the item to the settings with zero price.

The best revenue the seller can hope for is $\sum_{i \in
[n]} v_i ([n])$. However, it is not realistic when the seller does
not know agents' valuation functions. We follow the tradition of
the literature~\cite{FiatGHK02,Goldberg2006,FeigeFHK05,Hartline2005}
of algorithmic mechanism design on competitive auctions
with limited or unlimited supply and consider the best revenue uniform price benchmark,
which is defined as maximal revenue that auctioneer can get for a fixed uniform price for
the good. In the literature on classical competitive auctions this benchmark was called
$\calF$ and formally is defined as follows.

\begin{definition}
For the vector of agent's bids $\vecb$
\[
\calF(\vecb)=\max_{c\ge 0, S\subset[n]}\left(c\cdot |S|\Big| \forall i\in
S~~ b_i\ge c \right).
\]
\end{definition}

This definition generalizes naturally to our model with
externalities and is defined rigorously as follows.

\begin{definition}
\label{def:F} For the collection of agents' bid functions $\vecb$.
\[
\calF(\vecb)=\max_{c\ge 0, S\subset[n]}\left(c\cdot |S|\Big| \forall i\in
S~~ b_i(S)\ge c \right).
\]
\end{definition}

The important point of considering $\calF$ in the setting of classical
auctions is that the auctioneer, when is given in advance the best uniform
price, can run a truthful mechanism with corresponding revenue. It
turns out that the same mechanism works truthfully and neatly for
our model. Specifically, a seller who is given in advance the price $c$ can
begin with the set of all agents and drop one by one those agents with
negative utility ($b_i(S)- c<0$); once there are left no agents to delete the
auctioneer sells the item to all surviving buyers at the given price $c$.

Traditionally, the major question arising before auctioneer in such circumstances
is to devise a truthful mechanism which has a good approximation ratio of the
mechanism's revenue on any possible bid $\vecb$ to the revenue of the benchmark,
assuming that agents bid truthfully in the latter case. Such ratio is usually called
{\em competitive} ratio of a mechanism.
However, it was shown (cf. \cite{Goldberg2006}) that no truthful
mechanism can guarantee any constant competitive ratio w.r.t. $\calF$.
Specifically, the unbounded ratio appears on the instances where the
benchmark buys only one item at the highest price. To overcome this
obstacle, a slightly modified benchmark $\calF^{(2)}$ has been
proposed and a number of competitive mechanisms w.r.t. $\calF^{(2)}$
were obtained~\cite{FeigeFHK05,FiatGHK02,Goldberg2006,Hartline2005}. The only difference of
$\calF^{(2)}$ from $\calF$ is in one additional requirement
that at least two buyers should be in the winning set. Similarly, for
any $k\ge 2$ we may define $\calF^{(k)}$.

\begin{definition}
\[
\calF^{(k)}(\vecb)=\max_{c\ge 0, S\subset[n]}\left(c\cdot|S|\Big| |S|\ge
k,~~ \forall i\in S~~ b_i(S)\ge c \right).
\]
\end{definition}

However, in case of our model the benchmark $\calF^{(2)}$ does not imply
the existence of constant approximation truthful mechanism. In order to
illustrate that later in Section \ref{sec:char} we will introduce a couple of
new models which differ from original one by certain additional restrictions on
the domain of agent's bids. We further give a complete characterization of truthful
mechanisms for these new restricted settings substantially exploiting the fact that
every agent's bidding language is single-parameter. Later we use that characterization
to argue that no truthful mechanism can achieve constant approximation with respect to
$\calF^{(2)}$ benchmark even for these cases. On the positive side, and quite surprisingly,
we can furnish our work in the next section with the truthful mechanism which has
constant approximation ratio w.r.t. $\calF^{(3)}$ benchmark for the general case of
multi-parameter bidding.

\section{Competitive Mechanism}
\label{section:Mechanism}
Here we give a competitive truthful mechanism, that is a mechanism which
guaranties the auctioneer to get a constant fraction of the revenue, he could get for
the best fixed price benchmark
assuming that all agents bid truthfully. We call it \mainmechanism. In the mechanism
we give the good to certain agents {\em for free}, that is without any payment. The general
scheme of the mechanism is as follows.

\begin{center}
\small{}\tt{} \fbox{\parbox{6in}{\hspace{0.05in} \\[-0.05in] \mainmechanism
\begin{enumerate}
\item Put every agent at random into one of the sets $A, B, C$.
\item Denote $r_{_A}(C)$ and $r_{_B}(C)$ the largest fixed price revenues one can extract from $C$
      given that, respectfully, either $A$, or $B$ got the good for free.
\item Let $r(C)=\max\{r_{_A}(C),r_{_B}(C)\}$.
\item Sell items to agents in $A$ for free.
\item Apply \costsharing(r(C), B, A) to extract revenue $r(C)$ from set $B$
      given that $A$ got the good for free.
\end{enumerate}}}
\end{center}

Bidders in $A$ receive items for free and increase the demand of
agents from $B$. One may say that they ``advertise'' the goods and
resemble the promotion selling participants. The agents in $C$
play the role of the ``testing'' group, the only service of which is to
determine the right price. Note that we take no agents of the
testing group into the winning set, therefore, they have nothing to gain for
bidding untruthfully. The agents of $B$ appear to be the source of
the mechanism's revenue, which is being extracted from $B$ by a cost
sharing mechanism as follows.

\begin{center}
\small{}\tt{} \fbox{\parbox{6.0in}{\hspace{0.05in} \\[-0.05in] \costsharing(r,X,Y)
\begin{enumerate}
\item $S \leftarrow X$.
\item Repeat until $T=\emptyset$:
      \begin{itemize}
          \item $T \leftarrow \{i| i\in S \mbox{ and }  b_i(S \cup Y) < \frac{r}{|S|} \}$.
          \item $S \leftarrow S\setminus T$.
      \end{itemize}
\item If $S\neq\emptyset$ sell items to everyone in $S$ at $\frac{r}{|S|}$ price.
\end{enumerate}}}
\end{center}

\begin{lemma}\label{lemma:truth}
\mainmechanism $ $ is universally truthful.
\end{lemma}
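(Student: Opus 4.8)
The plan is to establish universal truthfulness by showing that, conditioned on any fixed random partition into $A$, $B$, $C$, the mechanism is truthful for every agent; universal truthfulness then follows since the partition is the only source of randomness. I would split the argument according to which part an agent lands in. Agents in $A$ receive the good for free, so their payment is $0$ regardless of their reports; their utility is $v_i(S)\ge 0$ by voluntary participation, and since they cannot influence the set $A$ (membership is determined by the coin flips, not by bids), they have no incentive to misreport. Agents in $C$ are never allocated the good and never charged, so their utility is identically $0$ no matter what they bid; hence they are trivially truthful. The only substantive case is an agent $i\in B$.

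For $i\in B$, the key observation is that the value $r=r(C)$ used by \costsharing is computed entirely from the bids of agents in $C$ (via $r_{_A}(C)$ and $r_{_B}(C)$) and is therefore independent of $i$'s report. So from $i$'s perspective the mechanism reduces to the cost-sharing procedure \costsharing$(r,B,A)$ run with a fixed target $r$, fixed ``free'' set $Y=A$, and the fixed bids of all other agents in $B$. I would then argue that this cost-sharing step is a \emph{group-strategyproof} descending-price procedure in the classical Moulin–Shenker style, adapted to our externality setting: the procedure starts with $S=B$ and iteratively removes any agent whose bid $b_i(S\cup A)$ falls below the current threshold $r/|S|$, stopping when no one wants to be removed, and then charges the survivors the final threshold $r/|S|$. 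The crucial structural facts are (i) the thresholds $r/|S|$ are \emph{non-decreasing} as agents are removed, since $|S|$ only shrinks, and (ii) an agent's decision to be ``kept'' at a given stage depends only on its bid for the single set $S\cup A$, where $S$ is the current survivor set; moreover the set of survivors, and hence the final price, is independent of $i$'s bid on all sets other than the realized survivor sets along the execution.

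From these facts truthfulness for $i\in B$ follows by the standard argument for such clinching/elimination mechanisms. If $i$'s true value $v_i(S^*\cup A)$ at the final honest survivor set $S^*$ is at least the final price $r/|S^*|$, then truthful bidding keeps $i$ in at every round (by monotonicity, $v_i(S\cup A)\ge v_i(S^*\cup A)\ge r/|S^*|\ge r/|S|$ for every earlier, larger survivor set $S\supseteq S^*$), yielding utility $v_i(S^*\cup A)-r/|S^*|\ge 0$; any deviation either still leaves $i$ in the same final set $S^*$ at the same price (no change), or causes $i$ to be dropped at some round, giving utility $0$, or causes $i$ to ``survive'' a round it should have left — but this can only occur if $i$ overbids, which never lowers the eventual price and so cannot produce positive utility. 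Symmetrically, if $v_i(S^*\cup A)<r/|S^*|$, truthful bidding drops $i$ (utility $0$), while any profitable-looking deviation would require $i$ to be kept through rounds where its true value is below the threshold, giving nonpositive utility. I would present this as a short case analysis on whether $i$ is a ``winner'' or ``loser'' under truthtelling, using monotonicity (property~\ref{i3_1}) to propagate the threshold comparison across the nested survivor sets. The main obstacle, and the part to write carefully, is handling the externality correctly: unlike the classical case, agent $i$'s relevant bid at each round is for the set $S\cup A$ (survivors plus the free group), so I must verify that (a) $r$ genuinely does not depend on $i$, (b) the monotone-threshold argument still goes through when the ``goods'' being compared are different sets $S\cup A$ at different rounds rather than a single fixed outcome, and (c) an agent cannot, by misreporting on sets that are never queried in the honest run, force the execution down a different branch that is more favorable — which is ruled out because any alternative branch is reached only by a report that differs from the truth on a set the mechanism actually queries, and at that query the same keep/drop logic applies.
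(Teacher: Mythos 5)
Your proposal is correct and follows essentially the same route as the paper: fix the random partition, dispose of $A$ (never queried, zero payment) and $C$ (never win) trivially, and for $B$ observe that $r$ is computed only from $C$'s bids while the per-round threshold $r/|S|$ rises and the queried valuation $v_i(S\cup A)$ falls by monotonicity as $S$ shrinks, so no deviation can turn a truthful loser into a profitable winner or profitably alter a truthful winner's outcome. Your write-up is in fact more explicit than the paper's (which states the cost-sharing argument in two sentences); the only point to phrase carefully is that agents in $A$ are harmless not merely because they cannot change their own membership, but because their bids are never queried and hence cannot influence the final winning set $S$ at all.
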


\begin{proof}

The partitioning of agent set $[n]$ into $A$, $B$, $C$ does not
depend on an agent's bids. When a partition is fixed, our mechanism
becomes deterministic. Therefore, we are only left to prove
truthfulness for that deterministic part. Let us do so by passing
through the proof separately for each set $A$, $B$ and $C$.

\begin{itemize}
\item Bids of agents in $A$ do not affect the outcome of
      the mechanism. Therefore, they have no incentive to lie.

\item No agents from $C$ could gain any profit from bidding untruthfully,
      since their utilities will be zero regardless of their bids.

\item Let us note that the \costsharing\ is applied to the agents in $B$ and the
      value of $r$ does not depend on their bids, since both $r_{_A}$ and $r_{_B}$ are
      retracted from $C$ irrespectively of bids from $A$ and $B$. Also let us note that
      at each step of the cost sharing mechanism the possible payment $\frac{r}{|S|}$
      is rising, and meanwhile the valuation function, because of monotonicity
      condition, is going down. Hence, manipulating a bid does
      not help any agent to survive in the winning set and to receive a positive
      utility, if by bidding truthfully he had been dropped from it. Neither
      mis-reporting a bid could help an agent to alter the surviving set and in the same
      time remain a winner.
      The former two observations conclude the proof of truthfulness for $B$.
\end{itemize}
\end{proof}

Therefore, from now on we may assume that $b_i(S)=v_i(S).$

\begin{theorem}\label{thm:upper-bound}
\mainmechanism $ $ is universally truthful and has an expected revenue
of at least $\frac{\calF^{(3)}}{324}$.
\end{theorem}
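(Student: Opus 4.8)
The plan is to fix an optimal witness for the benchmark and then argue that, with constant probability over the random three-way split, the mechanism both chooses a price that is a constant fraction of $\calF^{(3)}/k$ and collects it from a constant fraction of the $\Theta(k)$ relevant bidders. Let $(c^*,S^*)$ attain $\calF^{(3)}$, put $k:=|S^*|\ge 3$ so that $\calF^{(3)}=c^*k$, and by Lemma~\ref{lemma:truth} assume every bidder reports truthfully. Write $A^*:=A\cap S^*$, $B^*:=B\cap S^*$, $C^*:=C\cap S^*$; since each bidder is placed independently and uniformly, $(|A^*|,|B^*|,|C^*|)$ is multinomial, so for an absolute constant $p_0>0$ the event $\calE$ that the three parts are balanced (each of size within a constant factor of $k/3$, which for $k=3$ just means all three are nonempty) has $\Pr[\calE]\ge p_0$; this is Chernoff for large $k$ and a finite check for small $k$. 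It therefore suffices to show that whenever $\calE$ holds the realized revenue is at least $c_1\calF^{(3)}$ for an absolute constant $c_1$, since then $\mathbb{E}[\text{revenue}]\ge p_0 c_1\calF^{(3)}$, with the constants arranged so that $p_0c_1\ge 1/324$.

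The first ingredient is that the testing set $C$ pins down a good price. Fix $i\in C^*$ and write $S^*=(A^*\cup C^*)\cup(B^*\cup C^*)$, a union of two sets each containing $i$; subadditivity (condition~\ref{i3_2}) gives $v_i(A^*\cup C^*)+v_i(B^*\cup C^*)\ge v_i(S^*)\ge c^*$, so at least one of these two ``promoted'' valuations is $\ge c^*/2$. Splitting $C^*$ according to which promoter -- $A$ or $B$ -- witnesses this, the larger part has $\Omega(k)$ bidders, and combining it with the corresponding promoting set (enlarging the winner set only helps, by monotonicity, condition~\ref{i3_1}) certifies that $r_{_A}(C)$ or $r_{_B}(C)$, hence $r(C)=\max\{r_{_A}(C),r_{_B}(C)\}$, is at least a constant times $c^*k=\calF^{(3)}$.

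The second ingredient is that this target $r(C)$ is actually extracted from $B$. Conditioned on $A$, the sets $B$ and $C$ are exchangeable random halves of $[n]\setminus A$, so the very computation that turned a fixed price into revenue on $C$ with promoter $A$ also yields, with constant probability, the matching bound: the largest fixed-price revenue extractable from $B$ with $A$ free is at least $r(C)$. By the monotonicity argument already used in the proof of Lemma~\ref{lemma:truth} -- along the \costsharing\ loop the candidate price $r(C)/|S|$ only rises while each $v_i(S\cup A)$ only falls -- this inequality is precisely the condition under which \costsharing$(r(C),B,A)$ halts with a nonempty surviving set, in which case it collects exactly $r(C)$. Intersecting this event with $\calE$ still has constant probability, which after tracking the constants gives the claimed $\calF^{(3)}/324$.

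The main obstacle I anticipate is exactly this transfer step: one must guarantee simultaneously that \costsharing\ run on $B$ with target $r(C)$ does not overshoot (and so does not collapse to the empty set and zero revenue) and that it still leaves $\Theta(k)$ paying winners. Positive externality means a candidate winner set cannot be pruned to restore feasibility without lowering everyone's valuation, so the usual random-sampling ``the two halves behave alike'' comparison cannot be applied off the shelf; it has to be interleaved with the subadditive decomposition of $S^*$ while carefully keeping straight, in each of $r_{_A}(C)$, $r_{_B}(C)$, and the extractable revenue from $B$, who is playing the promoter. Reconciling these promoter roles is what forces a three-way split rather than a two-way one, and it is the source both of the gap between $\calF^{(2)}$ and $\calF^{(3)}$ and of the size of the constant.
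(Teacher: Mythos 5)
Your architecture tracks the paper's (random three-way split, use $C$ to fix a price, a subadditive decomposition of the optimal set into an $A$-promoted and a $B$-promoted part, cost-sharing on $B$), but two steps have genuine gaps. The first is in your ``first ingredient.'' From $v_i(A^*\cup C^*)+v_i(B^*\cup C^*)\ge v_i(S^*)\ge c^*$ you learn that, say, $v_i(A\cup C^*)\ge c^*/2$ for every $i$ in the sub-part $C^*_A$ of $C^*$ witnessed by $A$. But to certify $r_{_A}(C)\ge |C^*_A|\cdot c^*/2$ you must exhibit a fixed-price outcome in which the paying set is $C^*_A$ itself, i.e.\ you need $v_i(A\cup C^*_A)\ge c^*/2$; monotonicity runs the wrong way when you shrink the winning set from $A\cup C^*$ to $A\cup C^*_A$, so the one-shot split proves nothing about $r_{_A}(C)$ (and symmetrically for $r_{_B}(C)$). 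This is exactly what Lemma~\ref{lemma:half} repairs with an iterative closure: it grows $T\subseteq S\cap C$ by repeatedly adding every $i$ with $v_i(A\cup T\cup\{i\})\ge p^*/2$, so the set sold to coincides with the set appearing inside the valuation; and if $T$ ends up small, the termination condition $v_i(A\cup T\cup\{i\})<p^*/2$ on $W=(S\cap C)\setminus T$, combined with subadditivity and $S\setminus(A\cup T)\subseteq B\cup W$, gives $v_i(B\cup W)\ge p^*/2$ for all $i\in W$ — again with the sold set matching the argument of $v_i$. Some such fixed-point construction is unavoidable here.

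The second gap is the transfer step you yourself flag as the main obstacle, and exchangeability of $B$ and $C$ does not close it: the event ``the fixed-price revenue extractable from $B$ with $A$ free is at least $r(C)$'' is not symmetric under swapping $B$ and $C$, because $r(C)=\max\{r_{_A}(C),r_{_B}(C)\}$ is itself defined through both sets, so no off-the-shelf ``two random halves behave alike'' comparison applies. The paper avoids sampling arguments entirely: it factors the randomness into a sieve into $S_1,S_2,S_3$ followed by a uniform labeling, sets $r_{i_0j_0}=\max_{i\neq j}r_{ij}$ over all six ordered role assignments, and notes that with probability exactly $1/6$ the labeling realizes $B=S_{i_0}$, $A=S_{j_0}$, $C=S_{k_0}$; on that labeling $r(C)\le r_{i_0j_0}$ holds \emph{deterministically}, which is precisely the condition under which \costsharing\ terminates with a nonempty surviving set and collects $r(C)$ (your description of when \costsharing\ succeeds is correct, modulo the short induction showing the witnessing set is never pruned). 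You need this argmax-over-labelings argument, or an equivalent deterministic one, in place of the heuristic comparison; the untracked constants ($1/6\cdot 1/4\cdot 2/27=1/324$, the last factor being Lemma~\ref{lemma:partition}) are a secondary matter.
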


\begin{proof}

We are left to prove the lower bound on the competitive ratio of our
mechanism, as we have shown the truthfulness in Lemma
\ref{lemma:truth}.

For the purpose of analysis, we separate the random part of our
mechanism into two phases. In the first phase, we sieve agents
randomly into three groups $S_1$, $S_2$, $S_3$ and in the second
one, we label the groups at random by $A$, $B$ and $C$. Note that
the combination of these two phases produces exactly the same
distribution over partitions as in the mechanism.

Let $S$ be the set of winners in the optimal $\calF^{(3)}$ solution
and the best fixed price be $p^*$. For
$1\leq i\neq j \leq 3$ we may compute $r_{ij}$ the largest revenue
for a fixed price that one can extract from set $S_i$ given $S_j$ ``advertising''
the good, that is agents in $S_j$ anyway get the good for free and thus increase
the valuations of agents from $S_i$ though contribute nothing directly to the revenue.

First, let us note that the cost-sharing part of our
mechanism will extract one of these $r_{ij}$ from at least one of the
six possible labels for every sample of the sieving phase.
Indeed, let $i_0$ and $j_0$ be the indexes for which $r_{i_0 j_0}$
achieves maximum over all $r_{ij}$ and let
$k_0=\{1,2,3\}\setminus\{i_0,j_0\}$. Then the cost-sharing mechanism will
retract the revenue $r(C)=max(r_{_A}(C),r_{_B}(C))$ on the labeling
with $S_{j_0}=A$, $S_{i_0}=B$ and $S_{k_0}=C$. It turns out, as we
will prove in the following lemma, that one can get a lower bound
on this revenue within a constant factor of $r_{_\calF}(C)$; the revenue
we got from the agents of $C$ in the benchmark $\calF^{(3)}$.

\begin{lemma}\label{lemma:half}
$r(C)\geq \frac{r_{_\calF}(C)}{4}$.
\end{lemma}

\begin{proof}
Let $S_c=S\cap C$. Thus, by the definition of $\calF^{(3)}$, we have
$r_{_\calF}(C)=|S_c| \cdot p^*$ and for all $i\in S_c$, $v_i(S)\geq p^*$.

We define a subset $T$ of $S_c$ as a final result of the following
procedure. \texttt{
\begin{enumerate}
  \item $T\leftarrow \emptyset$ and $X \leftarrow \{i| i \in S_c \mbox{ and } v_i(A\cup\{i\})\geq \frac{p^*}{2} \}$.
  \item While $X \neq \emptyset$
  \begin{itemize}
    \item $T \leftarrow T \cup X$,
    \item $X \leftarrow \{i| i \in S_c \mbox{ and } v_i(A\cup T\cup\{i\} )\geq \frac{p^*}{2} \}$
  \end{itemize}
\end{enumerate}}

For any agent of $T$ we have $v_i(A\cup T )\ge\frac{p^*}{2}$
because the valuation function is monotone. Now if
$|T|\geq \frac{|S_c|}{2}$, we get the desired lower bound. Indeed,
$$r(C)\geq r_{_A}(C)\geq \frac{|S_c|}{2} \cdot \frac{p^*}{2}
=\frac{|S_c| \cdot p^*}{4}= \frac{r_{_\calF}(C)}{4}.$$

Otherwise, let $W=S_c\setminus T$. Then we have $|W|\geq
\frac{|S_c|}{2}$. For an agent $i\in W$ it holds true that $v_i(
A \cup T\cup\{i\}) < \frac{p^*}{2}$, since otherwise we should
include $i$ into $T$. However, since $i$ wins in the
optimal $\calF^{(3)}$ solution, we have $v_i(S)\geq p^*$. The
former two inequalities together with the subadditivity of $v_i(
\cdot)$ allow us to conclude that $v_i(S\setminus(A \cup T)
)\geq \frac{p^*}{2}$ for each $i\in W$. Hence, we get $v_i(B
\cup W)\geq \frac{p^*}{2}$ for each $i\in W$, since $S\setminus(A
\cup T)\subseteq B\cup W$. Therefore, we are done with the proof,
since
$$r(C)\geq r_{_B}(C)\geq |W|\cdot\frac{p^*}{2} \ge\frac{|S_c| \cdot
p^*}{4}= \frac{r_{_\calF}(C)}{4}.$$
\end{proof}

Let $k_1$, $k_2$, $k_3$ be the number of winners of the optimal
$\calF^{(3)}$ solution, respectively, in $S_1$, $S_2$, $S_3$.

For any fixed partition $S_1$, $S_2$, $S_3$ of the sieve phase by
applying Lemma \ref{lemma:half}, we get that the expected revenue of
our mechanism over a distribution of six permutations in the second
phase should be at least
$$\frac{1}{6}\cdot\frac{1}{4}\min\{k_1,k_2,k_3\} \cdot p^*.$$

In order to conclude the proof of the theorem we are only left to
estimate the expected value of $\min\{k_1,k_2,k_3\}$ from below by
some constant factor of $|S|$. The next lemma will do this for us.

\begin{lemma}\label{lemma:partition}
Let $m\ge 3$ items independently at random be put in one of
the three boxes and let $a$, $b$ and $c$ be the random variables
denoting the number of items in these boxes. Then
$\mathbb{E}[\min\{a,b,c\}]\geq \frac{2}{27}m$.
\end{lemma}

\ifabs{}{
\begin{proof}
Intuitively, it is clear that for the large $m$ the value of
$\mathbb{E}[\min\{a,b,c\}]$ should be close to $\frac{m}{3}$ (the
expectation of each random variable $a$, $b$ and $c$). More formally,
we have three random variables with dependency on them given by the
relation $a+b+c=m$. Now consider separately one of them, say
$a$. Then the distribution of $a$ is nothing else but the
distribution one may get taking the sum of independent and
identically distributed random variables $X_1,X_2,\ldots X_m$ drawn
from the Bernoulli distribution with parameters $p(1)=\frac{1}{3}$ and
$p(0)=\frac{2}{3}$.

We may use Chernoff's bounds on the probability of
$\frac{a}{m}=\frac{1}{m}\sum_{i=1}^{m}X_i$ diverging from
$p=\frac{1}{3}$ as follows.
\[
Pr\left(\frac{1}{m}\sum_{i=1}^{m}X_i\le p-\delta\right) \le \left(
\left(\frac{p}{p-\delta}\right)^{p-\delta}
\left(\frac{1-p}{1-p+\delta}\right)^{1-p+\delta} \right)^{m}
\]
Simple calculations for $p=\frac{1}{3}$ and $\delta=\frac{2}{9}$
show that for each $m\ge 17$ we will get
$$Pr\left(\frac{a}{m}\le\frac{1}{9}\right)<\frac{1}{9}.$$

Now, since the probability of the union of events is smaller than
the sum of probabilities of every event, we get
$$Pr\left(\frac{\min\{a,b,c\}}{m}\le\frac{1}{9}\right)<\frac{1}{3}.$$

Therefore,
$Pr\left(\frac{\min\{a,b,c\}}{m}\ge\frac{1}{9}\right)>\frac{2}{3}$
and
$$E\left(\frac{\min\{a,b,c\}}{m}\right)>\frac{1}{9}\cdot\frac{2}{3}=\frac{2}{27}.$$

The latter proves the lemma for $m\ge 17.$ For smaller $m$ we may
compute $\frac{\mathbb{E}[\min\{a,b,c\}]}{m}$ directly, or use more
accurate estimations on a probability and verify that
$\frac{\mathbb{E}[\min\{a,b,c\}]}{m}$ achieves its minimum when
$m=3.$
\end{proof}
}

By definition of the benchmark $F^{(3)}$ we have $m=k_1+k_2+k_3\geq
3$ and thus we can apply Lemma \ref{lemma:partition}. Combining
every bound we have so far on the expected revenue of our mechanism
we conclude the proof with the following lower bound.
\[
\frac{1}{6}\cdot\frac{1}{4}\mathbb{E}\left[\min\{k_1,k_2,k_3\}\right]\cdot
p^*\geq \frac{1}{24} \cdot  \frac{2}{27}\cdot p^* \cdot m =
\frac{F^{(3)}}{324}.
\]
\end{proof} 

\section{Restricted Single-parameter valuations}
\label{sec:char}

We introduce here a couple of special restricted cases of
the general setting with single parameter bidding language. For these models we only
specify restrictions on the valuation functions.
In each case we assume that $t_i$ is a single private parameter for agent $i$
that he submits as a bid and $w_i(S)$ and $w_{i}'(S)$ are fixed publicly known functions
for each possible winning set $S$. The models then are described as follows.

\begin{itemize}
\item {\em Additive} valuation $v_i(t_i,S)=t_i + w_i(S).$
\item {\em Scalar} valuation $v_i(t_i,S)=t_i\cdot w_i(S).$
\item {\em Linear} valuation $v_i(t_i,S)=t_i w_i(S)+ w_{i}'(S),$ i.e. combination of
      previous two.
\end{itemize}

We note that we still require that $w_i(S)=w'_i(S)=0$ if $i \not \in S$.
These settings are now single parameter domains, which is
the most well studied and understood case in mechanism design.

\subsection{A characterization}
The basic question of mechanism design is to describe
truthful mechanisms in terms of simple geometric conditions.
Given a vector of $n$ bids, $\vecb=(b_1,\ldots,b_n)$, let $b_{-i}$
denote the vector, where $b_i$ is replaced with a `?'.
It is well known that truthfulness implies a {\em monotonicity}
condition stating that if an agent $i$ wins for the bid vector
$\vecb=(b_{-i}, b_i)$ then she should win for any bid vector
$(b_{-i}, b'_i)$ with $b'_i\ge b_i$. In single-dimensional domains
monotonicity turns out to be a sufficient condition for
truthfulness~\cite{Archer01}, where prices are determined by the
threshold functions.


In our model valuation of an agent may vary for different winning
sets and, thus, may depend on her bid. Nevertheless, any truthful
mechanism still has to have a bid-independent allocation rule,
although now it does not suffice for truthfulness. However, in the case
of linear valuation functions we are capable of giving a complete
characterization.

\begin{theorem}
\label{th:allocation-characterization} In the model with linear
valuation functions $v_i(t_i,S)=t_i\cdot w_{i}(S)+w_{i}'(S)$ an
allocation rule $\calA$ may be truthfully implemented if and only if
it satisfies the following conditions:
\begin{enumerate}
\item $\calA$ is bid-independent, that is for each agent $i$,
bid vector $\vecb=(b_{-i},b_{i})$ with $i\in\calA(\vecb)$ and any
$b'_i\ge b_i$, it holds that $i \in \calA(b_{-i}, b'_i)$.

\item $\calA$ encourages asymptotically higher bids, i.e. for any fixed $b_{-i}$
and $b'_i\geq b_i$, it holds that $w_i(\calA(b_{-i}, b'_i)) \geq
w_i(\calA(b_{-i}, b_i))$.
\end{enumerate}
\end{theorem}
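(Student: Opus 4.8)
The plan is to prove both directions of the equivalence, treating necessity and sufficiency separately. For necessity, suppose $\calA$ is part of a truthful mechanism $\calM = (\calA, \calP)$. The bid-independence condition (1) is the standard monotonicity-of-winning argument: if agent $i$ wins at $(b_{-i}, b_i)$ but loses at $(b_{-i}, b'_i)$ for some $b'_i \ge b_i$, then because raising $t_i$ can only increase $v_i(t_i, S) = t_i w_i(S) + w_i'(S)$ (note $w_i(S) \ge 0$ since $w_i$ is nonnegative — this uses the sign of $w_i$, which I should confirm is assumed), the agent with true type $b'_i$ could report $b_i$, win, and obtain utility at least as large as the honest (zero, since he loses) utility, contradicting truthfulness — I need to be a little careful and compare the threshold price as well, but the winning-monotonicity part is routine. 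For condition (2), I would fix $b_{-i}$ and compare the two allocations $S = \calA(b_{-i}, b_i)$ and $S' = \calA(b_{-i}, b'_i)$ with $b'_i \ge b_i$. Writing out the IC constraints in both directions — the agent of type $b'_i$ does not want to report $b_i$, and the agent of type $b_i$ does not want to report $b'_i$ — gives two inequalities involving $p_i(b_{-i}, b_i)$, $p_i(b_{-i}, b'_i)$, and the valuations $b'_i w_i(S') + w_i'(S')$, $b'_i w_i(S) + w_i'(S)$, etc. Adding these two inequalities, the prices and the $w_i'$ terms cancel, leaving $(b'_i - b_i)\bigl(w_i(S') - w_i(S)\bigr) \ge 0$, and since $b'_i \ge b_i$ we conclude $w_i(S') \ge w_i(S)$, which is exactly condition (2).

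For sufficiency, assume $\calA$ satisfies (1) and (2); I must exhibit a payment rule $\calP$ making $(\calA, \calP)$ truthful. The natural candidate is a threshold payment: for agent $i$ with $b_{-i}$ fixed, condition (1) says the set of winning bids is an up-set $[\tau_i(b_{-i}), \infty)$ (up to boundary conventions), so define the threshold $\tau_i = \tau_i(b_{-i})$, and set the price of a winning agent $i$ to $p_i(b_{-i}, t_i) = \tau_i \cdot w_i(\calA(b_{-i}, t_i)) + w_i'(\calA(b_{-i}, t_i))$ — that is, charge exactly the valuation that the threshold type would have assigned to the realized winning set. Losers pay nothing. Then a winning agent of type $t_i \ge \tau_i$ gets utility $(t_i - \tau_i) w_i(\calA(b_{-i}, t_i)) \ge 0$, confirming voluntary participation. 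To check truthfulness I would compare, for a fixed true type $t_i$, reporting $t_i$ versus reporting some $b_i' $. If $b_i'$ still wins, both allocations have $w_i$-value at least... — here condition (2) does the work: it guarantees that as the reported bid increases the $w_i$-value of the allocation is nondecreasing, so the utility $u_i(b_i') = t_i \cdot w_i(\calA(b_{-i},b_i')) + w_i'(\calA(b_{-i},b_i')) - \tau_i w_i(\calA(b_{-i},b_i')) - w_i'(\calA(b_{-i},b_i')) = (t_i - \tau_i) w_i(\calA(b_{-i},b_i'))$ is, as a function of the reported bid, equal to $(t_i - \tau_i)$ times a monotone-nondecreasing quantity; hence if $t_i \ge \tau_i$ it is maximized by bidding as high as possible (and honest bidding $t_i$ already wins, giving the max over the winning region), and if $t_i < \tau_i$ the agent loses honestly with utility $0$ and any winning deviation yields $(t_i - \tau_i) w_i(\cdot) \le 0$. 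The $w_i'$ terms cancel identically in the utility because the price absorbs them, which is why the argument goes through despite the extra additive component.

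The main obstacle, and the step I would spend the most care on, is the sufficiency direction: one has to be sure that charging the threshold type's valuation for the \emph{realized} set (rather than for some fixed set) is well-defined and actually monotone in the right way — this is where condition (2) is genuinely needed and where the linear structure $t_i w_i(S) + w_i'(S)$ matters, since for general $v_i(t_i, S)$ no such clean threshold price exists. A secondary technical point is handling boundary behavior of the threshold $\tau_i$ (whether the infimum winning bid itself wins), and the sign assumption $w_i(S) \ge 0$; I would state these conventions explicitly at the start of the proof. I would also remark that condition (1) alone is insufficient here precisely because without (2) the quantity $w_i(\calA(b_{-i}, \cdot))$ could jump downward as the bid rises, breaking the single-crossing property that underlies the threshold-payment construction.
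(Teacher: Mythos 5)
Your necessity argument is exactly the paper's: writing the two incentive constraints for types $b_i$ and $b'_i$ and adding them so that the payments and the $w_i'$ terms cancel, leaving $(b'_i-b_i)\bigl(w_i(S')-w_i(S)\bigr)\ge 0$. That direction is fine.

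The sufficiency direction, however, has a genuine error. Your payment rule charges every winner the \emph{single} threshold type's valuation for the realized set, $p_i = \tau_i\, w_i(\calA(b_{-i},b_i)) + w_i'(\calA(b_{-i},b_i))$, which yields utility $u_i(t_i,b_i') = (t_i-\tau_i)\,w_i(\calA(b_{-i},b_i'))$. As you yourself observe, this is $(t_i-\tau_i)$ times a quantity that is nondecreasing in the \emph{reported} bid; so for $t_i>\tau_i$ the utility is maximized by bidding as high as possible, not by bidding $t_i$. Your parenthetical claim that honest bidding ``gives the max over the winning region'' is false whenever the allocation changes within the winning region so that $w_i(\calA(b_{-i},\cdot))$ strictly increases above $t_i$ --- which conditions (1) and (2) explicitly permit. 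Concretely: take $w_i(S_1)=1$, $w_i(S_2)=2$, $w_i'\equiv 0$, with $\calA$ giving a losing set below $1$, $S_1$ on $[1,2)$, and $S_2$ on $[2,\infty)$. Then $\tau_i=1$, your prices are $1$ and $2$ respectively, and a type $t_i=1.5$ gets utility $0.5$ honestly but utility $1$ by overbidding to $2$. The fix is exactly what the paper does: partition the bid axis into intervals $I_0,\dots,I_s$ on which the allocation stays within one equivalence class of $w_i$-value, and charge a \emph{different, increasing} price on each interval, built from telescoping increments $d_j=\inf_{x\in I_{j+1}}v_i(x,S_j)-\inf_{x\in I_j}v_i(x,S_j)$ (a discrete Myerson payment); truthfulness then follows from a telescoping-sum argument using the ordering on sets. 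In the example above the correct price on $[2,\infty)$ is $3$, not $2$. Your construction would only be valid in the degenerate case where $w_i(\calA(b_{-i},\cdot))$ is constant on the whole winning region (e.g., the additive case $w_i\equiv$ const), so the sufficiency proof as written does not establish the theorem.
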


\ifabs{
Here we prove that these conditions are indeed necessary. The sufficiency part is deferred to the full paper, where we prove the characterization for a even more generalized family of single parameter valuation functions.

\begin{proof}
The necessity of the first monotonicity condition was known. So we prove here that the second condition is also necessary. In the truthful mechanism, an agent's payment should not depend
on her bid, if by changing it mechanism does not shift the allocated
set. We denote by $p$  the payment of agent $i$ for winner set
$\calA(b_{-i}, b_i)$ and by $p'$ the payment of agent $i$  for winner set $\calA(b_{-i}, b'_i)$.  If the agent's true value is $b_i$, by truthfulness, we
have
\[b_i\cdot w_{i}(\calA(b_{-i}, b_i))+w_{i}'(\calA(b_{-i}, b_i)) -p \geq b_i\cdot w_{i}(\calA(b_{-i}, b'_i))+w_{i}'(\calA(b_{-i}, b'_i)) -p'.\]
And if  the agent's true value is $b'_i$, we have
\[b'_i\cdot w_{i}(\calA(b_{-i}, b'_i))+w_{i}'(\calA(b_{-i}, b'_i)) -p' \geq b'_i\cdot w_{i}(\calA(b_{-i}, b_i))+w_{i}'(\calA(b_{-i}, b_i)) -p.\]
Adding these two inequalities and using the fact that $b'_i\geq b_i$, we have
\[w_i(\calA(b_{-i}, b'_i)) \geq
w_i(\calA(b_{-i}, b_i)).\]
\end{proof}
}
{
\begin{proof}
We need in essence the following property, which we call
\textit{marginal monotonicity} and which holds for linear valuation
functions.

\begin{definition}
For any fixed sets $S_1$ and $S_2$, let

$g_i(t_i,S_1,S_2)=v_i(t_i,S_1)-v_i(t_i,S_2)$. Then $g_i$ as a
function of $t_i$ should be either strictly monotone (increasing or
decreasing), or constant.
\end{definition}

Thus, in fact, one can substitute in theorem
\ref{th:allocation-characterization} the requirement of valuation
function being linear for the condition of marginal monotonicity. In
the latter case the second condition of theorem
\ref{th:allocation-characterization} changes into: for any fixed
$b_{-i}$ and $b'_i\geq b_i$, it holds that $g_i(t_i,\calA(b_{-i},
b'_i),\calA(b_{-i}, b_i))$ is monotone increasing or constant. At
first we prove that this condition is indeed necessary.

\begin{proof}
If not, there has to exist $b_{-i}$ and $b'_i\geq b_i$ such that
$g_i(t_i,\calA(b_{-i}, b'_i),\calA(b_{-i}, b_i))$ is neither
monotone increasing or constant. Then by marginal monotonicity,
it is strictly monotone decreasing.

For a truthful mechanism an agent's payment should not depend
on her bid, if by changing it mechanism does not shift the allocated
set. We denote the payment of agent $i$ for winner set
$\calA(b_{-i}, b_i)$ as $p$ and for winner set $\calA(b_{-i}, b'_i)$
as $p'$.  If the agent's true value is $b_i$, by truthfulness, we
have
\[v_i(b_i, \calA(b_{-i}, b_i)) -p \geq v_i(b_i, \calA(b_{-i}, b'_i)) -p'.\]
And if  the agent's true value is $b'_i$, we have
\[v_i(b'_i, \calA(b_{-i}, b'_i)) -p' \geq v_i(b'_i, \calA(b_{-i}, b_i)) -p.\]
Adding these two inequalities, we have
\[
\begin{aligned}
v_i(b_i, \calA(b_{-i}, b_i)) - v_i(b_i, \calA(b_{-i}, b'_i))\\
\geq v_i(b'_i, \calA(b_{-i}, b_i)) -v_i(b'_i, \calA(b_{-i}, b'_i)).
\end{aligned}
\]
This contradicts the fact that $g_i(t_i,\calA(b_{-i},
b'_i),\calA(b_{-i}, b_i))$ is strictly monotone decreasing function.
\end{proof}

In the following, we prove that these two conditions are indeed sufficient
by providing an algorithm that computes payments. The payment algorithm is
determined by the allocation algorithm by the so called "Myerson
integral" \cite{Myerson1981,Archer01}. In our concrete case we can make it
more explicit. For a given bidder $i$ let us consider $S_1,S_2,\ldots,S_{N}$
as all the possible winning sets containing $i$ ($N=2^{n-1}$). We may define the order
$>_i$ on them by setting $S_k >_i S_j$ if $g_i(t_i, S_1, S_2)$ is
an increasing function in $v_i$ and $S_k <_i S_j$ if $g_i$ is
decreasing; naturally we get an equivalence relation $=_i$ if $g_i$
is constant. Therefore, one may split these $N$ sets into $m_i$
different equivalence classes, where among these different classes there is a
linear order. For convenience, we put all the sets that does not
contain $i$ into an equivalence class.

Then for each $i$ and fixed $b_{-i}$ one gets a finite partition $I_0, I_1,
\ldots, I_s$ of $[0,+\infty]$ into intervals (open, closed,
half open, half closed) and isolated points such that $[0,+\infty]=\cup_{j=0}^s
I_j$; for all $b_i$ running over $I_j$, $\calA(b_{-i}, b_i)$
could only change within the same equivalence class $\pi_j$. More specifically, there are $s+1$
equivalence classes $\pi_0,\pi_1,\ldots,\pi_s$ {\it w.r.t.} $<_i$, such that for any  $0\leq j < k \leq s$ and $S\in \pi_j$,
$S'\in \pi_k$ , we have $S<_i S'$.

Let $S_j$ be a set in $\pi_j$. We define
\[d_j=\inf_{x \in I_{j+1} }v_i(x, S_j) -\inf_{x \in I_j } v_i(x, S_j). \]
By the definition of equivalence classes, $d_j$ does not depend on
the choice of $S_j$ in $\pi_j$. Indeed, the definition of $\pi_j$
implies that $v_i(x,S)-v_i(x,S')=v_i(y,S)-v_i(y,S')$ for any
$S',S\in\pi_j$, which gives us what we need. Then the payment for a
bid $b_i \in I_{\ell}$ may be determined as follows:
\[p_i(b_i)=\inf_{x\in I_{\ell}} v_i\Big(x, \calA(b_{-i}, b_i)\Big)-\sum_{j=0}^{\ell-1}
d_j.\]

\begin{claim}
The above payment rule makes the mechanism truthful and as a result
the conditions in Theorem \ref{th:allocation-characterization} are
also sufficient.
\end{claim}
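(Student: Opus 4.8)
The plan is to verify directly that with the stated payments, no bidder $i$ can gain by misreporting. Fix $b_{-i}$ and let $v_i$ be the true type of $i$, say $v_i \in I_\ell$. By the first (bid-independence / monotonicity) condition, the allocated set $\calA(b_{-i}, b_i)$ only depends on which interval $b_i$ falls into, and by the second condition the equivalence classes $\pi_0 <_i \pi_1 <_i \cdots <_i \pi_s$ are traversed monotonically as $b_i$ increases. So it suffices to compare the utility of reporting truthfully (landing in $I_\ell$) against reporting some $b_i' \in I_k$ for $k \neq \ell$. The utility of reporting into $I_k$ is
\[
u_i(k) \;=\; v_i\big(\text{true},\, \calA(b_{-i}, b_i')\big) \;-\; p_i(b_i')
\;=\; v_i\big(\text{true},\, S_k\big) \;-\; \inf_{x\in I_k} v_i(x, S_k) \;+\; \sum_{j=0}^{k-1} d_j,
\]
using that $\calA(b_{-i},b_i')$ lies in $\pi_k$ and that $v_i(\text{true},\cdot)$ and $\inf_{x\in I_k} v_i(x,\cdot)$ are constant on $\pi_k$ up to a common additive shift (the same invariance used to show $d_j$ is well defined). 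Here ``$v_i(\text{true},\cdot)$'' denotes $v_i$ evaluated at the true parameter.

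The key step is then a telescoping argument. Define $F(k) = -\inf_{x\in I_k} v_i(x, S_k) + \sum_{j=0}^{k-1} d_j$, so that $u_i(k) = v_i(\text{true}, S_k) + F(k)$. From the definition $d_j = \inf_{x\in I_{j+1}} v_i(x,S_j) - \inf_{x\in I_j} v_i(x,S_j)$ and the class-invariance of valuation differences (so that $\inf_{x\in I_{j+1}} v_i(x,S_j)$ and $\inf_{x\in I_{j+1}} v_i(x,S_{j+1})$ differ by the same constant as $v_i(y,S_j)$ and $v_i(y,S_{j+1})$ do), one checks that
\[
F(k+1) - F(k) \;=\; -\big(\inf_{x\in I_{j+1}} v_i(x,S_{j+1}) - \inf_{x\in I_{j+1}} v_i(x,S_j)\big)\Big|_{j=k}
\;=\; v_i\big(x, S_k\big) - v_i\big(x, S_{k+1}\big)
\]
for any fixed $x$; call this constant difference $\Delta_k$. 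Hence $u_i(k+1) - u_i(k) = \big(v_i(\text{true}, S_{k+1}) - v_i(\text{true}, S_k)\big) + \Delta_k = \big(v_i(\text{true}, S_{k+1}) - v_i(\text{true}, S_k)\big) - \big(v_i(x, S_{k+1}) - v_i(x, S_k)\big)$, evaluated at any $x \in I_{k+1}$. By marginal monotonicity, $g_i(\cdot, S_{k+1}, S_k) = v_i(\cdot, S_{k+1}) - v_i(\cdot, S_k)$ is monotone increasing in the parameter (since $S_k <_i S_{k+1}$), so this difference is nonnegative precisely when the true parameter is at least the values in $I_{k+1}$, and nonpositive when it is at most the values in $I_k$. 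Consequently $u_i(k)$ is maximized at $k = \ell$ (the index of the true interval): moving $k$ up from $\ell$ has nonpositive increments, moving it down has nonpositive decrements.

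I would finish by handling the boundary cases — reporting a value that causes $i$ to lose (allocated set not containing $i$), where the payment is zero and the utility is $0 \le u_i(\ell)$ by individual rationality, which itself follows from the same telescoping down to the lowest winning class — and by checking the edge subtleties of open versus closed intervals, where the infima are attained only in the limit; these do not affect the inequalities since all the relevant quantities are continuous (indeed affine) in the parameter on each piece. The main obstacle I anticipate is purely bookkeeping: making the class-invariance of valuation differences precise enough that the telescoping of the $d_j$'s is rigorous, and being careful that $v_i(\text{true}, S_k)$ really is constant over $S_k \in \pi_k$ up to the shift that cancels against $\inf_{x\in I_k} v_i(x,S_k)$ — i.e. that the payment genuinely depends only on the equivalence class and not the representative. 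Once that invariance is nailed down, the truthfulness inequality is a one-line consequence of marginal monotonicity.
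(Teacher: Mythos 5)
Your route is essentially the paper's: write the utility of bidding into interval $I_k$ explicitly, telescope the $d_j$'s, and control each increment by marginal monotonicity (the paper's inequality (1)). The one place where your argument as written is actually wrong is the assertion that
\[
F(k+1)-F(k)\;=\;-\Bigl(\inf_{x\in I_{k+1}} v_i(x,S_{k+1})-\inf_{x\in I_{k+1}} v_i(x,S_k)\Bigr)\;=\;v_i(x,S_k)-v_i(x,S_{k+1})\quad\text{``for any fixed }x\text{'', a constant }\Delta_k.
\]
The first equality is a correct computation, but the second is not: class-invariance of valuation differences holds only for two sets in the \emph{same} equivalence class, whereas $S_k\in\pi_k$ and $S_{k+1}\in\pi_{k+1}$ lie in \emph{different} classes, so $g_i(x,S_{k+1},S_k)=v_i(x,S_{k+1})-v_i(x,S_k)$ is strictly monotone increasing in $x$, not constant. (Indeed your own next step needs it to be monotone, not constant, so the writeup is internally inconsistent at this point.) The repair is to observe that for linear valuations both infima over $I_{k+1}$ are approached at the same point $x^-_{k+1}=\inf I_{k+1}$, so that $F(k+1)-F(k)=-g_i(x^-_{k+1},S_{k+1},S_k)$ specifically at that point; then
\[
u_i(k+1)-u_i(k)\;=\;g_i(t_i,S_{k+1},S_k)-g_i(x^-_{k+1},S_{k+1},S_k),
\]
whose sign is determined by comparing the true type $t_i$ with $x^-_{k+1}$ via monotonicity of $g_i$. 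This gives nonnegative increments for $k+1\le\ell$ and nonpositive ones for $k+1>\ell$, i.e.\ unimodality with peak at the truthful interval, which is exactly the paper's case analysis in a slightly different packaging. Your handling of the losing/boundary cases and the well-definedness of the payment on equivalence classes is fine; only the ``constant $\Delta_k$'' step needs the fix above.
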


\begin{proof}
We use $u_i(t_i, b_i)$ to denote agent i's utility when his true
value is $t_i$ and he bids $b_i$, given that $b_{-i}$  is fixed. To
prove the truthfulness it suffices to show that $u_i(t_i,t_i)\ge
u_i(t_i,b_i)$ for any $t_i$, $b_i$ and fixed $b_{-i}$. Without loss
of generality we may assume that $t_i\in I_k$ and $b_i\in I_\ell$.
For each $j$, let us pick a set $S_j$ from $\pi_j$. Then we can
write an explicit formula for $u_i(t_i,b_i)$.

\begin{eqnarray*}
u_i(t_i,b_i)&=&v_i(t_i,\calA(b_i))-p_i(b_i)\\
&=& v_i(t_i,\calA(b_i))-\inf_{x\in I_{\ell}} v_i(x, \calA(b_i))+\sum_{j=1}^{\ell-1}d_j\\
&=& \inf_{x\in I_{\ell}}\Big(v_i(t_i,\calA(b_i))- v_i(x, \calA(b_i))\Big)+\sum_{j=1}^{\ell-1}d_j\\
&=& \inf_{x\in I_{\ell}}\Big(v_i(t_i,S_{\ell})- v_i(x, S_{\ell})\Big)+\sum_{j=1}^{\ell-1}d_j\\
&=& v_i(t_i,S_{\ell})-\inf_{x\in I_{\ell}} v_i(x, S_{\ell})+\sum_{j=1}^{\ell-1}d_j.\\
\end{eqnarray*}

Similarly one can get the formula
$$u_i(t_i,t_i)=v_i(t_i,S_k)- \inf_{x\in I_k} v_i(x,S_k)+\sum_{j=1}^{k-1}d_j.$$



Before we prove $u_i(t_i,t_i)\ge u_i(t_i,b_i)$, we need the following inequality: If $S >_i
S'$ and $x>y$ then we have
\begin{equation}
\label{eq:order} v_i(x,S)-v_i(x,S')-v_i(y,S)+v_i(y,S')\ge 0
\end{equation}
This follows from the definition of $>_i$.

Let us rewrite $u_i(t_i,t_i)-u_i(t_i,b_i)$ and consider two cases.

\noindent {\bf Case 1: $\mathbf{t_i>b_i}$.} Then $k\ge\ell$ and we get that $u_i(t_i,t_i)-u_i(t_i,b_i)$ is equal to
\[v_i(t_i,S_k)-v_i(t_i,S_\ell)-\inf_{x\in I_k}v_i(x, S_k)+\inf_{x\in I_\ell}v_i(x,S_\ell)+\sum_{j=\ell}^{k-1}d_j.\]
After plugging in all formulas for $d_j=\inf\limits_{x \in I_{j+1}
}v_i(x, S_j) -\inf\limits_{x \in I_j }v_i(x, S_j)$ and rearranging
some terms we can write
\begin{eqnarray*}
&\ & \bigg(v_i(t_i,S_k)-v_i(t_i,
S_\ell) \\ &\ \ \ - &\inf_{x \in I_k} v_i(x,S_k)+ \inf_{x \in I_k} v_i(x,S_\ell) \bigg)\\
&+&\bigg(\inf_{x\in I_{k}} v_i(x,S_{k-1})-\inf_{x \in I_k} v_i(x,S_\ell)\\ &\ \ \ - & \inf_{x\in I_{k-1}} v_i(x,S_{k-1})+\inf_{x\in I_{k-1}} v_i(x,S_\ell)\bigg)\\
&& \quad\quad\quad\quad\quad\quad\quad\quad\quad\quad\quad + \ldots\\
&+&\bigg(\inf_{x\in I_{\ell+2}} v_i(x,S_{\ell+1})-\inf_{x\in
I_{\ell+2}} v_i(x,S_\ell)\\ &\ \ \ - & \inf_{x\in I_{\ell+1}}
v_i(x,S_{\ell+1})+\inf_{x\in I_{\ell+1}} v_i(x,S_\ell)\bigg)
\end{eqnarray*}

By applying \ref{eq:order} to each term in parentheses we get the
desired inequality.


\noindent {\bf Case 2: $\mathbf{t_i<b_i}$.}
 Similarly, we get that $u_i(t_i,t_i)-u_i(t_i,b_i)$ is equal to
$$v_i(t_i,S_k)-\inf_{x\in I_k} (v_i(x, S_k))-v_i(t_i,S_\ell)+\inf_{x\in I_\ell} v_i(x,S_\ell)-\sum_{j=k}^{\ell-1}d_j.$$
Rearranging terms in a different way we can write the following.
\begin{eqnarray*}
&\ & \bigg(v_i(t_i,S_k)-v_i(t_i,
S_\ell)\\ &\ \ \ - & \inf_{x\in I_{k+1}}v_i(x,S_k)+\inf_{x\in I_{k+1}} v_i(x,S_\ell) \bigg)\\
&+&\bigg(\inf_{x\in I_{k+1}} v_i(x,S_{k+1})-\inf_{x\in I_{k+1}} v_i(x,S_\ell)\\ &\ \ \ - & \inf_{x\in I_{k+2}} v_i(x,S_{k+1})+\inf_{x\in I_{k+2}} v_i(x,S_\ell)\bigg)\\
&& \quad\quad\quad\quad\quad\quad\quad\quad\quad\quad\quad +\ldots\\
&+&\bigg(\inf_{x\in I_{\ell-1}} v_i(x,S_{\ell-1})-\inf_{x\in
I_{\ell-1}} v_i(x,S_\ell) \\ &\ \ \ - & \inf_{x\in I_{\ell}}
v_i(x,S_{\ell-1})+\inf_{x\in I_{\ell}} v_i(x,S_\ell)\bigg)
\end{eqnarray*}
Again inequality (\ref{eq:order}) applied to each term in brackets
concludes the proof.
\end{proof}
\end{proof}

\begin{remark}
\label{rm:issue} If all valuation functions are continuous, this is
the unique payment rule to make the mechanism truthful up to the
additive constant (as a function of $b_{-i}$) to all possible
payments of $i$. Since we assume $p_i=0$ if $i$ is not in the winner
set, then the payment is fixed as above in most cases. However, if
$i$ wins even when bidding $0$ for some fixed $b_{-i}$ then one can
reduce the payment by a fixed number in $[0, v_i(0,S_1)]$ for all
payments of $i$.
\end{remark}

\begin{remark}
Marginal monotonicity is a crucial property for our fairly simple
characterization. For example if the valuation functions are of the
form $v_i(t_i, S) = min (C, t_iw_i(S))$ one can find a truthful
mechanism with $\calA(b_{-i},x)=S_1,$ $\calA(b_{-i},y)=S_2$ and
$\calA(b_{-i},z)=S_1$ for $x<y<z$. The latter example seems to be
quite natural if an agent has a budget constraint and scalar
valuation function. That leads us to an interesting question to
characterize all truthful mechanisms in our model for broader class
of valuation functions.
\end{remark}
}

\subsection{From {\large $\calF^{(2)}$} to {\large  $\calF^{(3)}$} }
\label{subsection:F2}

Here we show that the usage of $\calF^{(2)}$ as a
benchmark may lead to an unbounded approximation ratio even for the restricted single parameter scalar valuations.
This justifies why we used a slightly modified benchmark  $\calF^{(3)}$ in Section \ref{section:Mechanism}.

\begin{theorem}
\label{cl:example} There is no universally truthful mechanism that
can archive a constant approximation ratio w.r.t. $\calF^{(2)}$.
\end{theorem}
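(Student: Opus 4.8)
Here is the line of attack I would follow.

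The plan is to produce a two‑agent scalar instance on which $\calF^{(2)}$ can extract revenue that grows with the bids, while every truthful mechanism is forced to hand the item to the two agents at a price that cannot grow. Take $n=2$ and the public functions $w_1\equiv w_2$ defined by $w_i(\{1,2\})=1$ and $w_i(S)=0$ for every other $S$. This is monotone, (trivially) subadditive, and vanishes off sets containing $i$, so it is a legal instance of the scalar model (note that the model permits an agent to get zero value as the sole winner). Then $v_i(t_i,S)=t_i$ if $S=\{1,2\}$ and $v_i(t_i,S)=0$ otherwise, and since the only admissible set has size two, $\calF^{(2)}(t_1,t_2)=2\min\{t_1,t_2\}$. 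The first observation is that a mechanism can collect positive revenue only by allocating $\{1,2\}$: a winning singleton $\{i\}$ carries value $0$, hence by individual rationality is sold for $0$.

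I would then feed this instance into Theorem~\ref{th:allocation-characterization}. Because each $w_i$ takes only the values $0$ and $1$, the requirement that $w_i(\calA(b_{-i},b_i))$ be nondecreasing in $b_i$ says exactly that, for every fixed $b_{-i}$, the event ``$\calA=\{1,2\}$'' is upward closed in $b_i$; applying this to both agents, the region $R:=\{(t_1,t_2):\calA(t_1,t_2)=\{1,2\}\}$ is upward closed in the product order. Consequently each agent has a threshold $g_i(t_{-i}):=\inf\{b_i:(b_i,t_{-i})\in R\}$ which is \emph{nonincreasing} in $t_{-i}$; the payment rule accompanying the characterization forces agent $i$ to be charged exactly the threshold price $g_i(t_{-i})$ whenever $\calA(t_1,t_2)=\{1,2\}$; and individual rationality gives $g_i(t_{-i})\le t_i$ on $R$. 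So the revenue on $(t_1,t_2)$ equals $g_1(t_2)+g_2(t_1)$ if $(t_1,t_2)\in R$ and is $0$ otherwise, and in all cases it is at most $g_1(t_2)+g_2(t_1)$.

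The decisive step restricts attention to the diagonal inputs $\vecb_j=(2^j,2^j)$, $j\ge 1$, on which $\calF^{(2)}(\vecb_j)=2^{\,j+1}$. Fix a deterministic truthful mechanism and let $j_0$ be the least $j$ with $\vecb_j\in R$ ($j_0=\infty$ if there is none). Since $R$ is upward closed, $\vecb_j\in R$ for all $j\ge j_0$; since each $g_i$ is nonincreasing and $g_i(2^{j_0})\le 2^{j_0}$ by individual rationality at scale $j_0$, we get $g_i(2^j)\le 2^{j_0}$ for all $j\ge j_0$, so $\mathrm{rev}(\vecb_j)\le 2^{\,j_0+1}$ for $j\ge j_0$ and $\mathrm{rev}(\vecb_j)=0$ for $j<j_0$ --- a deterministic truthful mechanism is competitive on at most one scale. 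Weighting the scales geometrically therefore gives $\sum_{j\ge 1}2^{-j}\,\mathrm{rev}(\vecb_j)\le\sum_{j\ge j_0}2^{-j}\cdot 2^{\,j_0+1}=4$, a bound uniform over all deterministic truthful mechanisms; taking the expectation over the coin flips and using $\mathbb{E}[\mathrm{rev}(\calM,\vecb_j)]=\mathbb{E}_r[\mathrm{rev}(M_r,\vecb_j)]$, the same bound $\sum_{j\ge 1}2^{-j}\,\mathbb{E}[\mathrm{rev}(\calM,\vecb_j)]\le 4$ holds for every universally truthful $\calM$. On the other hand $\sum_{j=1}^{m}2^{-j}\calF^{(2)}(\vecb_j)=2m$. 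If $\calM$ had competitive ratio $\rho$ then $\mathbb{E}[\mathrm{rev}(\calM,\vecb_j)]\ge\calF^{(2)}(\vecb_j)/\rho$ for every $j$; summing the weighted inequalities over $j\le m$ would yield $2m/\rho\le 4$, and choosing $m>2\rho$ is a contradiction, so $\rho$ cannot be finite.

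The only genuinely quantitative point is the passage from ``$g_i$ is nonincreasing'' to the scale‑by‑scale revenue bound, and that is exactly where the obstruction lies: for a deterministic mechanism the claim is immediate (competitiveness would require $g_i\equiv 0$ on positive inputs, hence zero revenue while $\calF^{(2)}>0$), so the entire difficulty is ruling out randomization, which the geometric averaging accomplishes. The intuition is that a truthful mechanism cannot \emph{raise} agent $i$'s entry threshold (and hence its price) in response to agent $j$ bidding high, which is precisely what would be needed to track $\calF^{(2)}=2\min\{t_1,t_2\}$. The degenerate case $g_i\equiv\infty$ (the mechanism never allocates $\{1,2\}$) is free, since then the revenue is identically $0$. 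Finally I would note that this two‑agent instance already establishes the theorem, and that the same construction with additional (possibly externality‑free) bidders present goes through unchanged, so the impossibility is not an artifact of having only two agents.
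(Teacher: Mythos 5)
Your proof is correct, and it rests on the same core ideas as the paper's: a two-agent scalar instance in which the pair outcome is worth vastly more than the singleton outcome, plus the characterization of Theorem~\ref{th:allocation-characterization}, which forces the region where both items are sold to be upward closed and hence forces each agent's threshold price to be nonincreasing in the other agent's bid --- exactly the property that prevents tracking $\calF^{(2)}=\min\{t_1,t_2\}$ times the externality factor. Where you differ is in the choice of parameters and, more substantively, in how randomization is eliminated. The paper keeps a finite externality gap ($v_i(x,\{i\})=x$, $v_i(x,\{1,2\})=Mx$), fixes a single large profile $(x,x)$, and argues that all but an $\epsilon$-fraction of the deterministic mechanisms in the support already sell the pair at $(\tfrac{x}{2M},\tfrac{x}{2M})$, so their payments are capped at about $x$ while the benchmark is $2Mx$; taking $M$ large and $\epsilon$ small kills any constant ratio. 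You instead take the degenerate instance $w_i(\{i\})=0$, $w_i(\{1,2\})=1$ (legitimate, since the model only requires $v_i(S)\ge 0$) and replace the ``$1-\epsilon$ fraction at sufficiently large $x$'' step with a geometric average over the scales $(2^j,2^j)$: every deterministic truthful mechanism satisfies $\sum_j 2^{-j}\,\mathrm{rev}(\vecb_j)\le 4$ because it can be competitive at essentially one scale, whereas the weighted benchmark grows linearly in the number of scales, and the bound survives expectation over the coins. Your version of this last step is arguably tighter: the paper's selection of a ``sufficiently large $x$'' capturing $1-\epsilon$ of the support implicitly relies on a monotone-convergence argument over the distribution $\calD$, which your scale-averaging sidesteps. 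Both routes buy the same conclusion; yours trades a slightly longer setup for a cleaner treatment of the randomized case.
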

\begin{proof}

Consider the example of two people, such that everyone valuates the
outcome, where both have got the item, much higher than the outcome,
where only one of them getting the item, i.e.
$v_1(x,\{1\})=v_2(x,\{2\})=x$ and $v_1(x,\{12\})=v_2(x,\{12\})=M x$
for a large constant $M$. We note that these are single parameter scalar valuations.

We will show that any universally truthful mechanism $\calM_{\calD}$
with a distribution $\calD$ over truthful mechanisms cannot achieve
an approximation ratio better than $M$. Each truthful mechanism
$\calM$ in $\calD$ either sells items to both bidders for some pair
of bids $(b_1,b_2)$, or for all pairs of bids sells not more than
one item. In the first case, by our characterization of truthful
mechanisms (see theorem \ref{th:allocation-characterization}),
$\calM$ should also sell two items for the bids $(x,b_2)$ and
$(b_1,y)$, where $x\ge b_1$ and $y\ge b_2$. Therefore, $\calM$ has
to sell two items for any bid $(x,y)$ with $x\ge b_1$ and $y\ge
b_2$. Let us denote respectively the first and second group of
mechanisms in $\calD$ by $\calD_1$ and $\calD_2$.

We may pick sufficiently small $\epsilon$ and consider sufficiently
large $x$, such that at least $1-\epsilon$ fraction of mechanisms in
$\calD_1$ sells two items for bids $(\frac{x}{2M},\frac{x}{2M})$.
Note that

\begin{itemize}
\item revenue of $\calF^{(2)}$ for the bids $(x,x)$ is $2Mx$,
\item revenue of $\calM$ in $\calD_2$ for the bids $(x,x)$ is not greater than $x$,
\item revenue of more than $1-\epsilon$ fraction of mechanisms in
      $\calD_1$ is not greater than $2M\frac{x}{2M}=x$.
\item revenue of the remaining $\epsilon$ fractions of mechanisms in $\calD_1$ is
      not greater than $2Mx$.
\end{itemize}
Thus we can upper bound the revenue of $\calM_{\calD}$ by
$x(1-\epsilon)+2Mx\epsilon$ while the revenue of $\calF^{(2)}$ is
$2Mx$. By choosing sufficiently large $M$ and small $\epsilon$ we
will get an arbitrary large approximation ratio.
\end{proof}

\subsection{Better Mechanism for Additive valuations}
\label{subsec:additive}

\graphicspath{{figures/}}

Assuming that each valuation function is additive, that is of the form $v_i(t_i, S)=t_i+w_i(S)$
with only one private parameter $t_i$ and publicly known additive factor $w(S_i)$.
Then the second condition in Theorem \ref{th:allocation-characterization} becomes trivial, which means that
the monotonicity condition
only by itself suffices for an allocation rule to be truthfully
implementable. 

\ifabs{}
{
\begin{corollary}
If valuation functions are additive, i.e. for each $i$ there is
exactly one equivalence class for $=_i$, the monotonicity condition
only by itself suffices for an allocation rule to be truthfully
implementable.
\end{corollary}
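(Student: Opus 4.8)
The plan is to read the corollary off Theorem~\ref{th:allocation-characterization}. First I would observe that the additive model is literally an instance of the linear model of that theorem: taking the theorem's public coefficient to be $\mathbf{1}[i\in S]$ and its public additive factor to be the additive model's $w_i(S)$ puts $v_i(t_i,S)=t_i+w_i(S)$ in the required form (and respects $v_i=0$ for $i\notin S$ since $w_i(S)=0$ there), so the characterization applies verbatim. Equivalently, in the language of the corollary's hypothesis, for any two sets $S_1,S_2$ containing $i$ one has $g_i(t_i,S_1,S_2)=w_i(S_1)-w_i(S_2)$, which is constant in $t_i$, so all sets containing $i$ form a single $=_i$-equivalence class.

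Since the two conditions of Theorem~\ref{th:allocation-characterization} are jointly sufficient, and bid-independence (condition~1) is always necessary, the whole content I need is that condition~1 forces condition~2. But for the additive model condition~2 reads ``$\mathbf{1}[i\in\calA(b_{-i},b'_i)]\ge\mathbf{1}[i\in\calA(b_{-i},b_i)]$ for every $b'_i\ge b_i$'', that is, ``if $i$ wins at $b_i$ then $i$ wins at $b'_i$'' — which is exactly condition~1. (From the equivalence-class viewpoint the same thing is seen by a case split: if $i$ wins at $b_i$ then by monotonicity $i$ wins at $b'_i$ and $g_i\big(t_i,\calA(b_{-i},b'_i),\calA(b_{-i},b_i)\big)$ is the constant $w_i(\calA(b_{-i},b'_i))-w_i(\calA(b_{-i},b_i))$; if $i$ loses at $b_i$ then this quantity equals $v_i(t_i,\calA(b_{-i},b'_i))$, which is either $\equiv 0$ or of the form $t_i+w_i(\cdot)$ and so increasing in $t_i$.) Hence condition~2 is redundant and monotonicity alone suffices.

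For concreteness I would also spell out the resulting prices by specializing the payment formula from the proof of Theorem~\ref{th:allocation-characterization}: fixing $b_{-i}$, let $\theta_i$ be the threshold bid above which $i$ wins (well defined by monotonicity), and when $i$ is allocated a set $S$ charge $p_i=\theta_i+w_i(S)$. A one-line check shows a truthful agent of value $t_i$ then has utility $t_i-\theta_i$ irrespective of which winning set $S$ she is given — the public term $w_i(S)$ cancels against its copy in the price — so she can gain nothing by winning a different set or by dropping out, and voluntary participation holds precisely on the winning range. There is really no obstacle here beyond bookkeeping: the single point at which condition~2 of Theorem~\ref{th:allocation-characterization} could fail is a transition from ``$i$ winning at $b_i$'' to ``$i$ losing at $b'_i>b_i$'', and that is exactly the situation monotonicity forbids.
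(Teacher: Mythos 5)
Your proposal is correct and takes essentially the same route as the paper, which derives the corollary by observing that in the additive case the coefficient of $t_i$ is $\mathbf{1}[i\in S]$, so condition~2 of Theorem~\ref{th:allocation-characterization} collapses into condition~1 (bid-independence/monotonicity) and the two jointly sufficient conditions reduce to monotonicity alone. Your explicit threshold payment $p_i=\theta_i+w_i(S)$ and the cancellation check are a correct specialization of the paper's general payment formula and add nothing beyond bookkeeping.
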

}

We further show that for this restricted family of valuations,
we are able to run significantly simpler mechanism with the smaller competitive ratio
comparing to $\calF^{(2)}$ instead of $\calF^{(3)}$.

\begin{theorem}
Given any $\alpha$-completive truthful mechanism $\calM_0$ for unlimited
supply auctions without externalities, one may give a
$2(1+\alpha)$-competitive truthful mechanism for markets with an
additive valuation w.r.t. benchmark $\calF^{(2)}$.
\end{theorem}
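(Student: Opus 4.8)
The plan is to combine, via a fair coin, a mechanism that directly cashes in the public externality component with a single black‑box call to $\calM_0$ run on a publicly shifted bid vector. Write $P_i:=w_i([n])$; by monotonicity of $w_i$ this is the largest bonus agent $i$ can ever get, and it is publicly known. Toss a fair coin. On \emph{heads}, sell the good to every agent of $[n]$ and charge agent $i$ the price $P_i$; both the allocation and the prices are independent of all reports, so this branch is trivially truthful and individually rational (a winner's utility is $t_i+P_i-P_i=t_i\ge 0$) and it collects revenue exactly $R_H:=\sum_{i\in[n]}P_i$. On \emph{tails}, form the shifted bids $b'_i:=b_i+P_i$, feed $b'$ to $\calM_0$ to obtain a winning set $T$ with its threshold prices $\theta'_i$, allocate the good in our market to exactly the agents of $T$, and charge winner $i$ the amount $\max\{\theta'_i-P_i,0\}+w_i(T)$.

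For truthfulness of the tails branch the characterization of Section~\ref{sec:char} does the work. The induced allocation rule $b\mapsto T_{\calM_0}(b+P)$ is monotone in each coordinate because $\calM_0$ is monotone and the shift is a fixed constant, and for additive valuations monotonicity of the allocation rule alone suffices for truthful implementability (Theorem~\ref{th:allocation-characterization} and its additive specialization, the second condition there being trivially met). The associated Myerson payment has an explicit form: since the only equivalence classes are ``$i$ wins'' and ``$i$ loses'', the correction term $d_0$ vanishes and a winning $i$ pays $\inf_{x\ge\theta_i}\bigl(x+w_i(T)\bigr)=\theta_i+w_i(T)$, where $\theta_i=\max\{\theta'_i-P_i,0\}$ is $i$'s threshold in the original bid; this is exactly the stated charge. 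It is individually rational because $i\in T$ forces $b'_i=t_i+P_i\ge\theta'_i$, hence $t_i\ge\theta_i$ and the winner's utility $t_i-\theta_i$ is nonnegative. Conditioning on the coin and on $\calM_0$'s internal randomness gives universal truthfulness, so in the analysis we may assume $b_i=t_i$, i.e. $b'=t+P$.

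For the competitive bound, let $(S^*,c^*)$ attain $\calF^{(2)}$, so $|S^*|\ge 2$ and $t_i+w_i(S^*)\ge c^*$ for all $i\in S^*$. Monotonicity of $w_i$ gives $b'_i=t_i+P_i\ge t_i+w_i(S^*)\ge c^*$, so $(S^*,c^*)$ is feasible in the externality‑free instance $b'$ and therefore $\calF^{(2)}(b')\ge c^*|S^*|=\calF^{(2)}$, whence $\mathbb{E}[R_{\calM_0}(b')]\ge\calF^{(2)}(b')/\alpha\ge\calF^{(2)}/\alpha$. Since the tails revenue equals $\sum_{i\in T}\bigl(\max\{\theta'_i-P_i,0\}+w_i(T)\bigr)$, we get
\[
R_T\;\ge\;\sum_{i\in T}\bigl(\theta'_i-P_i+w_i(T)\bigr)\;=\;R_{\calM_0}(b')-\sum_{i\in T}\bigl(P_i-w_i(T)\bigr)\;\ge\;R_{\calM_0}(b')-R_H,
\]
and also $R_T\ge 0$, so $R_H+R_T\ge\max\{R_H,R_{\calM_0}(b')\}\ge R_{\calM_0}(b')$ for every realization of $\calM_0$'s randomness. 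Taking expectations, the mechanism's expected revenue is $\tfrac12\bigl(R_H+\mathbb{E}[R_T]\bigr)\ge\tfrac12\,\mathbb{E}[R_{\calM_0}(b')]\ge\calF^{(2)}/(2\alpha)\ge\calF^{(2)}/\bigl(2(1+\alpha)\bigr)$, as claimed.

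The one genuinely delicate point is the tails branch. One cannot simply reuse $\calM_0$'s own allocation and prices, because in our market a winner's realized bonus is only $w_i(T)$, not the $P_i$ built into $b'$, so $\calM_0$'s prices would violate individual rationality; shifting bids up front and re‑deriving the payments through the additive characterization repairs this, at the cost of an ``externality deficit'' $\sum_{i\in T}(P_i-w_i(T))$. The two facts that make the analysis close are (i) this deficit is at most $R_H$, so the heads branch has already paid for it, and (ii) the shifted instance $b'$ dominates the original $\calF^{(2)}$ benchmark; verifying that the Myerson payment for the shifted‑$\calM_0$ allocation is precisely $\theta_i+w_i(T)$ is the only place where the machinery of Section~\ref{sec:char} is actually invoked. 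The rest is routine.
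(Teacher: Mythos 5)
Your proof is correct, but it takes a genuinely different route from the paper's. The paper mixes the two branches with \emph{unequal} probabilities $\tfrac{1}{1+\alpha}$ and $\tfrac{\alpha}{1+\alpha}$, runs $\calM_0$ on the \emph{raw} private parameters $\vect$ (no shift), and does its real work in a geometric decomposition of the benchmark: sorting the $k$ winners of $\calF^{(2)}$ by their $t_i$'s, it covers at least half of the $k\times c$ revenue rectangle by a rectangle charged to $\tilde{\calF}^{(2)}$ (the benchmark on the $t$'s alone, with $\lfloor k/2\rfloor+1\ge 2$ winners) and a rectangle charged to $\sum_i w_i([n])$, yielding $\calF^{(2)}\le 2\tilde{\calF}^{(2)}+2\sum_i w_i([n])$ and hence the ratio $2(1+\alpha)$. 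You instead use a fair coin, feed $\calM_0$ the \emph{shifted} bids $t_i+w_i([n])$, and replace the geometric lemma by two observations: the shifted externality-free instance pointwise dominates the original benchmark ($\calF^{(2)}(b')\ge\calF^{(2)}$ is immediate from monotonicity of $w_i$), and the ``externality deficit'' $\sum_{i\in T}(w_i([n])-w_i(T))$ incurred when restoring individual rationality is at most the heads-branch revenue. Both arguments lean on the same additive-case specialization of Theorem~\ref{th:allocation-characterization} (monotonicity suffices; the Myerson payment of a winner is threshold plus $w_i(T)$), and your derivation of that payment is right. What your route buys is a slightly better constant --- you actually prove $2\alpha$-competitiveness, which dominates $2(1+\alpha)$ since $\alpha\ge 1$ --- and it dispenses with the figure-based covering argument; what the paper's route buys is a simpler mechanism (no bid shifting, no deficit accounting) at the cost of the weaker constant. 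Like the paper's proof, yours uses only monotonicity and nonnegativity of $w_i$, not subadditivity.
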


\ifabs{
We note that the theorem even does not require the subadditivity of the function $w(S_i)$.   The proof of this theorem is deferred to the full paper.
}{
\begin{proof}
We use the following mechanism
\begin{center}
\small{}\tt{} \fbox{\parbox{6.0in}{\hspace{0.05in} \\[-0.05in] $\textup{{\sc Mechanism-2}}$
\begin{enumerate}
\item At probability $\frac{1}{1+\alpha}$ give goods to everyone, charge each agent $i$ price $w_i([n])$.
\item At probability $\frac{\alpha}{1+\alpha}$, run allocation algorithm $\calA_0$
of $\calM_0$ on bid vector $\vect$; charge threshold payments according to Theorem
\ref{th:allocation-characterization}.
\end{enumerate}
}}
\end{center}
Let $\tilde{\calF}^{(2)}$ denote the benchmark's revenue for the
same vector of bids if we forget the external additive part of each
valuation. From figure \ref{f1}, we can bound the $\calF^{(2)}$ as
$\calF^{(2)}\leq 2 \tilde{\calF}^{(2)} + 2 \sum_i v_i([n])$.

Our mechanism can get the expected revenue as at least
\[
\frac{1}{1+\alpha} \sum_i v_i([n]) + \frac{\alpha}{1+\alpha}\cdot
\frac{1}{\alpha} \cdot \tilde{\calF}^{(2)}\geq \frac{1}{2(1+\alpha)}\calF^{(2)}.
\]

\begin{figure}[ht]
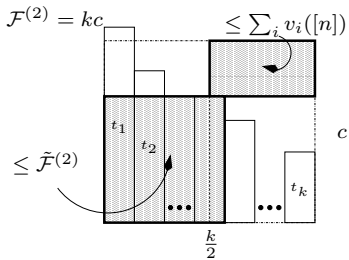

\begin{center}
\include{fig1_t}
\caption{Sort $k$ winning agents of $\calF^{(2)}$ according to their
interests. Two shaded rectangles cover at least half of
$\calF^{(2)}$ aria. Note, that we include in the rectangle
corresponding to $\tilde{\calF}^{(2)}$ at least
$\lfloor\frac{k}{2}\rfloor+1\ge 2$ agents.} \label{f1}
\end{center}
\end{figure}

\end{proof}

\noindent {\bf Remark:} For this competitive ratio, we do not need
the property that the functions $v_i(S)$ are sub-additive. We only
need the property that it is monotone.
}

\section{Discussion and Open Problems}
\label{section:Extensions}

To the best of our knowledge the model introduced in the current
paper is the first that takes into account positive externalities in
respect of studying truthful mechanism design for auctions in a worst
case revenue maximization and the first one in algorithmic community that
treats efficiently general multi-parameter case. Because of that there are
many promising ways for expansion of the model and we would like to discuss here some
possible directions. However, most of our results obtained for
such attempts are negative; thus, to get some positive results one
may try some further requirements and modifications of
the model.

\begin{enumerate}
\item \textbf{Valuations are not necessarily sub-additive.}
Then for any fixed $k$ there is no competitive mechanism with
respect to $\calF^{(k)}$. A bad instance is similar to the one in
section \ref{subsection:F2} (we let $v_i(t_i,[n])>>v_i(t_i,S)$ for each
$i$ and $S\varsubsetneq[n]$). However, one may consider relaxed
sub-additivity condition, i.e. $L(v_i(A)+v_i(B))\ge v_i(A\cup B)$
for a constant $L$ and each $i\in A,B\subset [n]$. Our mechanism will be
still working and remain competitive, though with additional factor depending on $L$.

\item \textbf{Making a copy of the good has a fixed cost for seller.}
For the original digital goods auctions one may easily make a
reduction to the setting with zero cost per copy: subtract
the cost from the agent's valuation and ignore those agents whose value
is less than zero. For our model with externalities this extension
may lead to an unbounded competitive ratio. \ifabs{Details can be founded in the full paper.}
{
\begin{claim}
If making a copy of the good has a fixed cost for seller, then
the competitive ratio may be unbounded.
\end{claim}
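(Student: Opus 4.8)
The plan is to make precise the failure of the ``ignore the below‑cost bidders'' reduction that trivializes a per‑copy cost in the externality‑free model, by exhibiting instances on which a truthful mechanism provably cannot cover its copying cost. Write $d>0$ for the cost of producing one copy, so the seller's objective becomes $\sum_{i\in S}p_i-d\,|S|$, and let the cost‑adjusted benchmark be $\calF^{(3)}_d(\vecb)=\max\{\,|S|(c-d)\mid |S|\ge 3,\ \forall i\in S\colon b_i(S)\ge c\,\}$, i.e.\ the best a uniform price can do net of the cost. I would work inside the single‑parameter \emph{scalar} model of Section~\ref{sec:char} with $v_i(t_i,S)=t_i\cdot|S|$ for $i\in S$ (and $0$ otherwise); the public multiplier $w_i(S)=|S|$ is monotone and sub‑additive, so Theorem~\ref{th:allocation-characterization} applies, and its condition~(2) says exactly that raising $b_i$ can only (weakly) enlarge the winning set as seen by $i$.

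\noindent\textbf{The coordination trap (deterministic case).}
First consider the ``marginal'' profiles $\vect=(t,\dots,t)$ with $t\in(d/n,\,d/(n-1))$ and $n\ge 3$. Selling to a set $S$ at these valuations forces every price to be at most $t|S|$, hence a net profit of at most $|S|(t|S|-d)$, which is \emph{strictly negative} whenever $|S|\le n-1$; so $\calF^{(3)}_d(\vect)=n(tn-d)>0$ can be approached only by serving all $n$ agents. Thus any competitive deterministic truthful $\calM$ has $\calA(\vect)=[n]$ for every such $t$. Now $w_i([n])=n$ is the \emph{unique} maximizer of $w_i$, so condition~(2) of Theorem~\ref{th:allocation-characterization} forces $\calA$ to stay equal to $[n]$ once any coordinate of an all‑$[n]$ profile is raised; raising coordinates one at a time gives: if $\calA(s,\dots,s)=[n]$ then $\calA(\vecb)=[n]$ for all $\vecb\ge(s,\dots,s)$. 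Hence the threshold $s^\ast:=\inf\{s:\calA(s,\dots,s)=[n]\}$ satisfies $s^\ast\le d/n$. Evaluating $\calM$ on the profile $(x,\dots,x)$ with $x$ large: $[n]$ is served, and for each $i$ any bid $b_i>s^\ast$ already keeps $i$ in $[n]$ against the others' bids $x$, so $i$'s threshold is $\le s^\ast$ and the Myerson/threshold payment of Theorem~\ref{th:allocation-characterization} gives $p_i\le s^\ast\,w_i([n])=s^\ast n\le d$. Therefore the net revenue at $(x,\dots,x)$ is at most $nd-nd=0$, while $\calF^{(3)}_d$ there equals $n(xn-d)\to\infty$: no finite competitive ratio survives. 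The same argument works against $\calF^{(k)}_d$ for every fixed $k$ (take $n\ge k$).

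\noindent\textbf{Universally truthful mechanisms and the main obstacle.}
To lift this to a distribution $\calD$ over deterministic truthful mechanisms I would follow the scheme of the proof of Theorem~\ref{cl:example}: split the support of $\calD$ into the mechanisms that ever serve the full set $[n]$ and those that never do. The latter cannot make a positive profit on the marginal profiles above; each mechanism in the former class has a ``commitment corner'' below which it has not yet locked onto $[n]$, and competitiveness on the marginal profiles forces all but an $\epsilon$‑fraction of the relevant mechanisms to have that corner driven below $(d/n,\dots,d/n)$, so they are trapped exactly as in the deterministic analysis. Averaging the two bounds and letting $x\to\infty$ (and $n\to\infty$, to push the resulting loss past any prescribed constant) yields the claim. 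The delicate point, and the one I expect to be the main obstacle, is precisely this bookkeeping: controlling the distribution of commitment corners so the trap bites for a $1-\epsilon$ fraction of the mixture simultaneously, in the style of the corner estimate in Theorem~\ref{cl:example}. The conceptual reason the cost is indispensable is visible throughout: it is the term $-d\,|S|$ that renders \emph{every} proper subset unprofitable on the marginal profiles, destroying the reduction that makes a fixed copying cost harmless in the model without externalities.
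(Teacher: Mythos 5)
Your deterministic argument is correct and is essentially the paper's own construction: the instance is the same (scalar valuations $v_i(t_i,S)=t_i\cdot|S|$, identical bids, a fixed per-copy cost), and the ``coordination trap'' --- competitiveness on the marginal profiles forces the allocation to lock onto $[n]$ at a threshold $s^\ast\le d/n$, after which the threshold payments cap the net profit at $0$ while the benchmark grows without bound --- is exactly the engine of the paper's contradiction. The claim, however, is about universally truthful mechanisms, and your treatment of the randomized case has a genuine gap. You assert that competitiveness on the marginal profiles forces ``all but an $\epsilon$-fraction'' of the mixture to have its commitment corner below $(d/n,\dots,d/n)$. It does not: at a single marginal profile $(t,\dots,t)$ with $t$ slightly above $d/n$, the mechanisms in the support that do not serve $[n]$ contribute at most $0$, while those that do contribute at most the full surplus $n(tn-d)$, so $L$-competitiveness at that profile only forces a $1/L$-fraction of the mixture to have committed by scale $t$. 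A $1/L$-fraction earning nothing at $(x,\dots,x)$ is perfectly consistent with the remaining $1-1/L$ fraction being competitive there, so letting $x\to\infty$ (or $n\to\infty$) produces no contradiction. The analogy with Theorem \ref{cl:example} is misleading on this point: there the $1-\epsilon$ fraction comes from continuity of measure over the fixed set $\calD_1$ as $x\to\infty$, not from any competitiveness constraint.

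The paper closes this gap with a multi-scale argument that your proposal does not contain. For each deterministic mechanism in the support let $\epsilon^\ast$ be the infimum scale at which it commits to $[n]$, and let $\calD_\epsilon$ be those with $\epsilon^\ast\in[\frac{\epsilon}{2L},\epsilon]$. At the profile of scale $\epsilon$, mechanisms with $\epsilon^\ast>\epsilon$ earn at most $0$, and mechanisms with $\epsilon^\ast<\frac{\epsilon}{2L}$ earn at most a $\frac{1}{2L}$-fraction of the benchmark because their payments are capped by their own low threshold; hence $L$-competitiveness forces $\Pr(\calD_\epsilon)\ge\frac{1}{2L-1}$ at \emph{every} scale $\epsilon\in(0,\frac{1}{n-1})$. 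Choosing geometrically separated scales (e.g.\ $\epsilon\in\{\frac{1}{n},\frac{1}{3Ln},\frac{1}{(3L)^2n},\dots\}$) makes the sets $\calD_\epsilon$ pairwise disjoint, and infinitely many disjoint events each of probability at least $\frac{1}{2L-1}$ is the contradiction. This is precisely the ``bookkeeping'' you flagged as the main obstacle; without it your argument establishes the claim only for deterministic truthful mechanisms.
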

\begin{proof}
Let us consider the instance with $n$ bidders: $v_i(t_i, S)=|S|*t_i$ for each $i\in S\subset
[n],$ ($n>3$). We let the price for making a copy be $n$ and the
vector of true interests $\vect(\epsilon)$ be $t_i= 1+\epsilon$,
where $0<\epsilon < \frac{1}{n-1}$. Let us notice that any truthful
mechanism may extract a positive revenue on that vector only when all
bidders get into the winning set. The revenue of $\calF^{(3)}$ at
the best uniform price $n(1+\epsilon)$ will be $n^2
\epsilon=n^2(1+\epsilon)-n^2$ also with all agents being in the
winning set. This revenue is the maximum that any truthful mechanism
could get on $\vect(\epsilon)$. Now let us assume that there is a
distribution $\calD$ over truthful mechanisms and that distribution
is $L$-competitive for some constant $L$. Any truthful mechanisms
allocating the good to $[n]$ for a $\vect(\epsilon_0)$ according to
our characterization should also allocate goods to $[n]$ on any
$\vect(\epsilon)$ with $\epsilon_0<\epsilon$. Thus, for each
truthful mechanism $\calM$ we may consider the infinum of $\epsilon$
such that $\calM$ allocates goods to $[n]$ on $\vect(\epsilon)$. We
denote as $\calD_{\epsilon}$ the mechanisms in $\calD$ with such
infinum laying in $[\frac{\epsilon}{2L},\epsilon]$. Note that each
truthful mechanism in $\calD\setminus\calD_{\epsilon}$ either
allocates goods not to $[n]$ on $\vect(\epsilon)$ and, therefore,
achieves negative or zero revenue, or allocates goods to $[n]$ even
on $\vect(\frac{\epsilon}{2L})$ and, thus, on $\vect(\epsilon)$ gets
the revenue to be not more than $n^2\frac{\epsilon}{2L}$. Hence, rewriting
the condition of $L$-competitiveness on $\vect(\epsilon)$ we obtain

$$n^2\epsilon\cdot prob(\calD_{\epsilon}|\calD)+\frac{n^2\epsilon}{2L}\cdot
(1-prob(\calD_{\epsilon}|\calD))\ge\frac{1}{L}\cdot n^2\epsilon.$$

Then $prob(\calD_{\epsilon}|\calD)\ge\frac{1}{2L-1}$ for any
$\epsilon\in(0,\frac{1}{n-1})$. Taking $\epsilon$ from
$\{\frac{1}{n},\frac{1}{3Ln},\frac{1}{(3L)^2 n},\ldots\}$ we get
infinitely many disjoint sets $\calD_{\epsilon}$ with
$prob(\calD_{\epsilon}|\calD)\ge\frac{1}{2L-1}$ and arrive at a
contradiction.
\end{proof}
}

\item \textbf{Limited supply.} This direction also looks very hard to explore,
since it is not clear even how to define a benchmark. A simple algorithm
where we merely start with $[n]$ and successively remove agents with
low valuations may fail, since we could finish with a larger
number of agents than provided supply. In the latter case it is
unclear which agents we should remove next. Another difficulty with
this direction is that one may think of limited supply as of hidden
negative externality. Indeed, if an agent buys something, then besides the
increment of other's valuations she also decreases the supply, thus
probably depriving other agents of the chance to get into the winning set.

\item \textbf{Positive valuation for an agent not getting the good.} If we
      drop the condition that $v_i(S)=0$ when $i\notin S$, then
      we cannot hope for any constant competitive mechanism. For example
      one may consider simple restricted valuation function $v_i(t_i,S)=t_i \cdot |S|,~~\forall i\in [n]$
      with single private parameter $t_i$ for each agent $i$.
      Clearly, in any mechanism it will be hard to motivate any agent to pay for the good, since
      agents prefer to loose and pay nothing rather than win and pay at least something given
      that the size of winning set does not decrease. \ifabs{More detailed explanation can be found in the full paper.} {
      \begin{claim}
Let $v_i(t_i,S)=t_i \cdot |S|,~~\forall i\in [n],~S\subset [n]$. Then there is no
universally truthful competitive mechanism w.r.t. $\calF^{(k)}$ for any fixed $k$.
\end{claim}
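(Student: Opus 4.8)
The plan is to reuse the scaling skeleton from the proof of the preceding claim, but with the hard ``cost'' constraint replaced by the benchmark's own spike. Fix $k$ and suppose for contradiction that a distribution $\calD$ over deterministic truthful mechanisms is $L$-competitive. Take $N>k$ agents with $v_i(t_i,S)=t_i|S|$, designate $k$ of them as ``core'' and $N-k$ as ``fillers'', and for a parameter $h>0$ let $\vect^{(h)}$ be the profile in which the core agents all report $h$ and the fillers all report $1$. Ranging over admissible winner sets $S$ with $|S|\ge k$ one gets $\calF^{(k)}(\vect^{(h)})=\max\{k^2h,\,N^2\}$, so for $h>N^2/k^2$ the benchmark equals $k^2h$ (sell only to the core at the uniform price $kh$), an arbitrarily tall spike carried entirely by the $k$ core agents while the fillers stay pinned at scale $1$. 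Since any filler has value $\le N$ for every outcome, individual rationality caps the fillers' total payment by $(N-k)N<N^2$, so on $\vect^{(h)}$ a mechanism's revenue is at most its core revenue plus $N^2$.

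For $k=1$ this already finishes cleanly. The single core agent faces the fixed bid vector $b_{-1}=(1,\dots,1)$, so by the single-parameter characterization (Theorem~\ref{th:allocation-characterization}) her winner-set size $x_1(z)=|\calA(b_{-1},z)|$ is a fixed non-decreasing function with $x_1(\infty)\le N$, and individual rationality together with truthfulness bound her payment on $\vect^{(h)}$ by the threshold integral $p_1(h)\le\int_0^h z\,dx_1(z)$. Swapping the order of integration,
\[
\int_0^\infty \frac{1}{h^2}\Big(\int_0^h z\,dx_1(z)\Big)dh=\int_0^\infty z\Big(\int_z^\infty\frac{dh}{h^2}\Big)dx_1(z)=\int_0^\infty dx_1(z)\le N .
\]
On the other hand $L$-competitiveness forces $p_1(h)\ge \calF^{(1)}(\vect^{(h)})/L-N^2=h/L-N^2\ge h/(2L)$ for every $h\ge 2LN^2$, whence $\int_{2LN^2}^\infty p_1(h)h^{-2}\,dh\ge\frac1{2L}\int_{2LN^2}^\infty h^{-1}\,dh=\infty$, contradicting the display. (For randomized $\calM_\calD$ one runs the identical argument with $p_1(h)$ replaced by $\mathbb E_\calD[p_1(h)]\le\int_0^h z\,d\mathbb E_\calD[x_1(z)]$ and $\mathbb E_\calD[x_1(\infty)]\le N$.)

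For $k\ge 2$ the same skeleton works, but the complication is that the benchmark spike needs all $k$ core agents to rise together, so no single core agent faces a fixed $b_{-i}$. I would instead argue combinatorially along a geometric sequence of scales $h_j=\rho^j$ with $\rho$ a large constant depending on $k,N,L$: for each deterministic truthful $\calM$ in the support of $\calD$ define $\theta(\calM)$ to be the infimum scale at which $\calM$'s core revenue on $\vect^{(h)}$ reaches $k^2h/(3L)$ (with $\inf\emptyset=\infty$), and bucket $\calM$ by the block of the sequence containing $\theta(\calM)$. Using monotonicity---condition~1 of Theorem~\ref{th:allocation-characterization} keeps each core agent winning as her own bid rises---one shows that once $\calM$ is ``committed at scale $\theta$'' its core payments on $\vect^{(h)}$ for $h\gg\theta$ are still governed by thresholds sitting at scale $\theta$, hence are $O(N\theta)$ rather than $\Omega(k^2h)$; thus a mechanism outside the block of $h$ contributes only $O(N\theta+N^2)\ll k^2h/L$ at scale $h$. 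Competitiveness then forces $\Pr_\calD[\text{block of }h_j]\ge c(L)>0$ for every large $j$, and since consecutive blocks are disjoint along the sequence, summing these probabilities exceeds $1$---the desired contradiction.

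The step I expect to be the real obstacle is exactly the ``thresholds stay at scale $\theta$'' claim in the $k\ge 2$ case: with no cost pinning down the winner set and with all $k$ core agents moving in lockstep, the clean ``freeze $b_{-i}$, integrate the Myerson identity in $b_i$'' estimate that made $k=1$ trivial does not apply verbatim, since raising core agent $i$'s bid also perturbs the $b_{-i}$ seen by core agent $j$. I would handle this by peeling the core agents off one at a time in a fixed order, at each step bounding the extra revenue unlocked by the increase it causes in the winner-set size (which is monotone in that agent's own bid by Theorem~\ref{th:allocation-characterization}) and summing the bounds. The remaining ingredients---the formula for $\calF^{(k)}$, the individual-rationality bound on the fillers, and the disjoint-window counting---are routine and mirror the previous claim; as there, only monotonicity of the valuations is used, not subadditivity.
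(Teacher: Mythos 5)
Your $k=1$ computation is correct and self-contained: with $b_{-1}$ frozen, the envelope bound $p_1(h)\le\int_0^h z\,dx_1(z)$ plus the Fubini swap cleanly shows a single agent cannot be charged proportionally to her bid at infinitely many scales. But the substance of the claim is $k\ge 2$ (the paper only defines $\calF^{(k)}$ for $k\ge 2$), and there your proposal is a sketch whose decisive step is left open — as you yourself flag. The lemma that a mechanism ``committed at scale $\theta$'' extracts only $O(N\theta)$ from the core at all later scales is not proved, and your definition of $\theta(\calM)$ as the infimum scale at which the \emph{revenue} reaches $k^2h/(3L)$ makes it genuinely hard to prove: a revenue threshold freezes no structure that you can propagate to larger $h$. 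The disjoint-window skeleton you are importing comes from the fixed-cost claim, where the bucketed object is the monotone event ``allocates to $[n]$''; here you have not identified the analogous monotone event, and ``bounding the extra revenue unlocked'' agent by agent is exactly the part that needs a real argument.

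The fix — and the paper's route — is to bucket by \emph{allocation size}, not revenue. Since only $|S|$ matters, the necessity direction of Theorem~\ref{th:allocation-characterization} (condition 2, whose add-the-two-IC-inequalities proof does not use $w_i(S)=0$ for $i\notin S$) gives that $|\calA(b_{-i},\cdot)|$ is non-decreasing in each coordinate. Hence each deterministic $\calM$ attains its global maximum of $|\calA|$ at some finite profile $\vecb^0$, and your ``one coordinate at a time'' idea then shows nothing is ever unlocked: for every $\vecb\ge n\vecb^0$ the size is already maximal, and a single downward deviation of agent $i$ to $n t_i^0$ (others fixed) bounds $p_i(\vecb)\le n^2 t_i^0$ via IR plus one IC inequality. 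So each deterministic mechanism's revenue on the diagonal $(t,\dots,t)$ is eventually bounded by a mechanism-dependent constant, while $\calF^{(k)}(t,\dots,t)=n^2t$ grows linearly; choosing $t$ so large that all but an $\epsilon$-fraction of $\calD$ has saturated defeats any constant ratio, with no core/filler split, no Myerson integral, and no geometric windows. I would replace your $k\ge2$ sketch with this argument.
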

\begin{proof}
Let's assume the contrary that there is a distribution $\calD$ of deterministic truthful mechanisms
with constant competitive ratio w.r.t. $\calF^{(k)}$.
Let $\calM=(\calA,\calP)$ be a mechanism in this distribution.
Clearly, to describe $\calA$ it suffices to specify only the size of winning set for every bid.
We may consider a bid vector $\vecb^0=(t_1^0,\dots,t_n^0)$ such that $\calA$ outputs a set $S^0$ of maximal possible size.
Note that by individual rationality the payment of each agent $i$ should not be larger than $t_i^0\cdot |S|$.
Now if agent $i$ has true type $t_i$ greater than $n\cdot t_i^0 $, while others bid $\vecb_{-i}^{0}$, then $\calA$ should
output a set $S$ of the same size as $S^0$. Indeed, by truthfulness we have
\[
t_i|S|\ge u_i(t_i,t_i)\ge
u_i(t_i, t_i^{0})= t_i |S^0|- \calP_i(\vecb^0)\ge t_i |S^0|-t_i^0\cdot n.
\]

Therefore, $t_i>n\cdot t_i^0\ge t_i(|S^0|-|S|)$ and hence $|S^0|=|S|.$
By similar argument we get that for any $\vecb=(t_1,\dots,t_n)$,
with $t_i\ge n\cdot t_i^0$, allocation rule $\calA$ should output the set of
the maximal size. As all outcomes are the same for $\vecb>n\vecb^0=(nt_1^0,\dots,nt_n^0)$,
we may also write an upper bound $n^2t_i^0$ on the payment of each agent $i$ on every such bid $\vecb$.

The revenue of $\calF^{(k)}$ on each bid $(t,\dots,t)$ is $tn^2.$
Let's take sufficiently large $t$, such that at least $(1-\epsilon)$ fraction of
mechanisms in $\calD$ output the largest possible set
on every bid vector $\vecb\ge (\frac{t}{n^2},\dots,\frac{t}{n^2})$. Then for the
bid $(t,\dots,t)$ the total payment of this $(1-\epsilon)$ fraction should be not more
than $n^2\frac{t}{n^2}=t.$ Thus the total expected revenue of $\calD$ is smaller than or equal to
$\epsilon n^2 t+(1-\epsilon)t$, while revenue of $\calF^{(k)}$ is $n^2t$. Therefore,
the competitive ratio is not more than
\[
\frac{n^2 t\epsilon+(1-\epsilon)t}{n^2t}=\epsilon+\frac{(1-\epsilon)}{n^2}.
\]
Taking $\epsilon$ sufficiently small and $n$ sufficiently large we come to a contradiction.
\end{proof}
}

\end{enumerate}

We would like to conclude the discussion with a list of open
problems

\begin{enumerate}
\item We got a constant competitive ratio w.r.t. to the fixed price benchmark.
Therefore, we think it will be an interesting research direction to obtain a competitive mechanism
with a better ratio. Also one may find it interesting to explore the
lower bounds for the new model with externalities.

\item Another important theoretical question is to give a
characterization of truthful mechanisms for general valuation
functions. In fact, the marginal monotonicity condition (see full version)
we were using for that may be not met when valuations functions are bounded
from above, e.g. budget constraint on the linear valuation.
Moreover, in such a case there exists a mechanism that cannot be
put in our characterization.

\item Truthful mechanism design for a market with externality is an
interesting and challenging research topic. In this paper, we were
studying only one particular setting. More generalization looks
interesting both for practical and theoretical points of view, for
example, negative externalities. It seems a challenging
question to find a good benchmark and design competitive
mechanisms.
\end{enumerate}

\bibliographystyle{plain}
\bibliography{game}

%

\end{document}